\title{Vector Commitments with Efficient Updates}
\author{Ertem Nusret Tas \and Dan Boneh}
\institute{Stanford University}
\begin{document}

\maketitle

\begin{abstract}
Dynamic vector commitments that enable local updates of opening proofs 
have applications ranging from verifiable databases with membership changes to stateless clients on blockchains.
In these applications, each user maintains a relevant subset of the committed messages and the corresponding opening proofs with the goal of ensuring a succinct global state.
When the messages are updated, users are given some global update information and update their opening proofs to match the new vector commitment.
We investigate the relation between the size of the update information and the runtime complexity needed to update an individual opening proof.
Existing vector commitment schemes require that either the information size or the runtime scale {\em linearly} in the number~$k$ of updated state elements.
We construct a vector commitment scheme that asymptotically achieves both length and runtime that is {\em sublinear} in $k$, namely $k^\nu$ and $k^{1-\nu}$ for any $\nu \in (0,1)$.
We prove an information-theoretic lower bound on the relation between the update information size and runtime complexity that shows the asymptotic optimality of our scheme.
For $\nu = 1/2$, our constructions outperform Verkle commitments by about a factor of $2$ in terms of both the update information size and runtime, but makes use of larger public parameters.
\end{abstract}

\section{Introduction}
\label{sec:introduction}

A Vector Commitment (VC) scheme~\cite{cfm08,ly10,cf13} enables a committer to succinctly commit to a vector of elements.
Later, the committer can generate an \emph{opening proof} to prove that 
a particular position in the committed vector is equal to a certain value.
VCs have found many applications in databases and blockchains \cite{bitcoin,ethereum} as they enable 
a storage system to only store a commitment to the vector instead of the entire vector. 
The data itself can be stored elsewhere along with opening proofs. 
In a multiuser system, every user might store only one position of the vector along with the opening proof for that position. 

Dynamic VCs~\cite{cf13} are vector commitments that support updates to the vector. 
Suppose the committed vector is of length $N$ and some $k < N$ positions in the vector are updated,
so that a new vector commitment is published.
Then, every user in the system will need to update their local opening proof to match the updated commitment,
and this is done with the help of some global \emph{update information} $U$ that is broadcast to all users. 
This information is typically generated and published by a manager who maintains the entire vector.
Applications of dynamic VCs include verifiable databases, zero-knowledge sets with frequent updates~\cite{cf13} and stateless clients for blockchains~\cite{stateless-clients}.
The challenge is to design a VC scheme that minimizes the size of the update information $U$
as well as the computation work by each user to update their local opening proof. 

For example, consider stateless clients on a blockchain as an important application for dynamic VCs.
The state of the chain can be represented as a vector of length $N$, where position~$i$ corresponds to the state of account number~$i$.
Every user will locally maintain its own state (corresponding to some position in the vector) 
along with an opening proof that enables the user to convince a third party
as to its current state.
Whenever a new block is published, the state of the chain changes.  
In particular, suppose $k$ out of the $N$ positions in the vector need to be updated.  
The block proposer will publish the {\em update information} $U$ along with the new block,
and every user will update their opening proof to match the new committed state of the chain. 
Thus, users can ensure that their opening proofs are up to date with respect to the latest committed state of the chain.

We stress that in this application, the data being updated, namely the updated positions and diffs, is published as part of the block.
The update information $U$ only contains additional information that is needed to update the opening proofs.
When we refer to the size of $U$, we refer to its size, excluding the updated data (i.e., excluding the updated positions and diffs).

\begin{table}[]
\centering
\begin{tabular}{|c|c|c|c|}
 \hline
 Vector Commitment & $|U|$ & $T$ & PP \\  [0.5ex]
 \hhline{|=|=|=|=|}
 Merkle tree~\cite{merkle} & $\Tilde{\Theta}(k)\ |H|$ & $\Tilde{O}(1)$ & N \\
 \hline
 Verkle tree~\cite{verkle} & $\Tilde{\Theta}(k)\ |G|$ & $\Tilde{O}(1)\ |H|\ T_G$ & Y \\
 \hhline{|=|=|=|=|}
 \rule[-1em]{0pt}{2.5em} 
 This work with $\nu \in [0,1]$
 & $\Tilde{\Theta}(k^\nu) |H|$ & $\Tilde{\Theta}(k^{1-\nu})\ T_f$ & N \\
 \hhline{|=|=|=|=|}
 KZG commitments~\cite{kate} & $\Tilde{O}(1)$ & $\Tilde{\Theta}(k)\ T_G$ & Y \\
 \hline
 RSA accumulators and VCs~\cite{rsa-acc,rsa-vc}
 & $\Tilde{O}(1)$ & $\Tilde{\Theta}(k)\ T_G$ & N \\
 \hline
 Bilinear accumulators~\cite{bilinear-acc,bilinear-VC}
 & $\Tilde{O}(1)$ & $\Tilde{\Theta}(k)\ T_G$ & N \\
 \hline
\end{tabular}
\vspace{0.5cm}
\caption{Comparison of different VCs. $|U|$ denotes the length of the update information. $T$ denotes the runtime of a single proof update. 
$|G|$ and $|H|$ denote the size of a single group element and a single hash value, respectively. 
$T_G$ and $T_f$ denote the time complexity of a single group operation and a single function evaluation for the hash function used by the VC.
The last column PP is `Y' if the proof update requires pre-processing to generate a global and fixed table of auxiliary data needed for proof updates.
}
\vspace{-0.5cm}
\label{tab:vc-schemes}
\end{table}

In this paper, we investigate the trade-off between the length $|U|$ of the update information and the time complexity of proof updates.
Dynamic VCs can be grouped into two categories in terms of these parameters (Table~\ref{tab:vc-schemes}).
Tree-based VCs~\cite{merkle,hyperproofs} enable users to update their proofs in time $O(\polylog{N})$.
Each opening proof typically consists of $\polylog{(N)}$ inner nodes, 
and the update information $U$ contains the changes in the inner nodes affected by the message updates.
Each user calculates its new opening proof by downloading the relevant inner nodes published as part of $U$.
When $k$ positions are updated, a total of $O(k \log{(N)})$ inner nodes in the tree are affected in the worst case.
Thus, when each inner node has length $\Theta(\lambda)$, proportional to the security parameter $\lambda$, 
the update information consists of $O(k \log{(N)} \lambda)$ bits.

In contrast, algebraic VCs~\cite{kate,rsa-acc,rsa-vc,bilinear-acc,bilinear-VC} enable users to update their opening proofs with only 
knowledge of the updated data.
They do not require any additional update information $U$ to be published beyond the indices and the `diffs' of the updated data.
Thus, the length of the update information needed to update the opening proofs is $O(1)$. 
However, algebraic VCs typically require each user to read all of the changed messages and incorporate the effect of these changes on their proofs, resulting in $\Theta(k)$ work per proof update.

To summarize, while tree-based VCs support efficient calculation of the new opening proofs by publishing a large amount of update information, linear in $k$,
algebraic VCs do not require any additional update information beyond the updated data, but suffer from a large runtime for proof updates, linear in $k$.
We formalize the dichotomy of VCs in Section~\ref{sec:problem-statement}.

\subsection{Our Results}
\label{sec:introduction-construction}

We propose a family of VCs that can support \emph{sublinear update}, where both the length $|U|$ of the update information and the complexity of proof updates are sublinear in $k$.
More specifically, our VCs can attain $|U| = \Theta(k^\nu \lambda)$, $\nu \in (0,1)$, with a proof update complexity of $\Theta(k^{1-\nu})$ operations.
Our candidate construction with sublinear update is a \emph{homomorphic tree}, based on the VCs from~\cite{ACZ+20,hyperproofs}, where each inner node can be expressed as a \emph{sum} of the \emph{partial digests} of the messages underneath the node and its sibling (Section~\ref{sec:sublinear-complexity-efficient}).
We also propose an alternative construction called the \emph{homomorphic Merkle tree}, first formalized by~\cite{PT11,PSTY13,QZC+14} and based on the VCs from~\cite{PT11,PSTY13,QZC+14}.
The algebraic structure of these trees enable each user to calculate the \emph{effect} of a message update on any inner node without reading other inner nodes or messages.
The homomorphic tree constructions are based on pairings, and the homomorphic Merkle tree constructions are based on lattices~\cite{PT11,PSTY13,QZC+14,ACZ+20,hyperproofs}.

In Sections~\ref{sec:sublinear-complexity-efficient} and~\ref{sec:sublinear-complexity}, we provide update algorithms (Alg.~\ref{alg.vc.sublinear.p} and~\ref{alg.vc.sublinear}) for homomorphic Merkle trees and VCs based on homomorphic trees, parameterized by $\nu \in (0,1)$.
Our algorithm identifies a special subset of size $\Tilde{\Theta}(k^\nu)$ of the inner nodes affected by the message updates, and publish their new values as $U$; so that the users need not calculate these values.
These inner nodes are selected carefully to ensure that any inner node \emph{outside} of $U$ is affected by at most $\Theta(k^{1-\nu})$ updated messages.
Thus, to modify its opening proof, each user has to calculate the partial digests of at most $\Theta(k^{1-\nu})$ updated messages per inner node within its proof (that consists of $\Theta(\log{(N)})$ inner nodes).
Moreover, to calculate these partial digests, the user only needs the `diffs' of the updated messages.
This brings the asymptotic complexity of proof updates to $\Tilde{\Theta}(k^{1-\nu})$ operations, while achieving an update information size of $\Tilde{\Theta}(k^\nu \lambda)$ as opposed to $\Tilde{\Theta}(k\lambda)$ on Merkle trees using SHA256.

In Sections~\ref{sec:eval-homomorphic-tree} and~\ref{sec:eval-homomorphic}, we give estimates for the update information size and the update time of opening proofs for homomorphic tree and homomorphic Merkle tree constructions.
When the trade-off parameter $\nu$ is $1/2$, the homomorphic tree construction based on~\cite{ACZ+20} requires an update information size of $52.53$ kBytes and an update time of $0.34$ seconds.
In contrast, even the most efficient homomorphic Merkle tree construction~\cite{QZC+14} requires an update information size of $110.88$ MBytes and an update time of $32.6$ seconds when $\nu = 1/2$ (\cf Section~\ref{sec:eval-homomorphic}).
The large update information size is due to the lattice-based construction of the VCs from~\cite{PT11,PSTY13,QZC+14} used in the homomorphic Merkle trees.
Although the homomorphic tree constructions based on~\cite{ACZ+20,hyperproofs} have the same asymptotic performance as homomorphic Merkle trees, they are concretely more efficient for real-world parameters.
However, unlike homomorphic Merkle trees, the homomorphic tree constructions are not secure against quantum computers, require trusted setup, and have large public parameter sizes.

In Section~\ref{sec:lower-bound}, we prove an information theoretic lower bound on the size of the update information given an upper bound on the runtime complexity of proof updates.
The bound implies the asymptotic optimality of our scheme with sublinear update.
Its proof is based on the observation that if the runtime complexity is bounded by $O(k^{1-\nu})$, a user that wants to update its proof cannot read beyond $O(k^{1-\nu})$ updated messages.
Then, to calculate the effect of the remaining $k-O(k^{1-\nu})$ messages on its opening proof, the user has to download parts of the structured update information $U$.
Finally, to obtain the lower bound on $|U|$, we use Shannon entropy and lower bound the number of bits, namely $O(k^\nu \lambda)$, required to capture the total information that will be downloaded by the users; while maintaining the security of the VC with parameter $\lambda$.

\subsection{Applications}
\label{sec:introduction-applications}

We identify three main applications for VCs with sublinear update.

\subsubsection{Stateless clients for Ethereum}
Ethereum is the largest decentralized general purpose computation platform by market cap.
Ethereum state (\eg, user accounts) is currently stored in the form of a Merkle tree~\cite{eth-state-trie} and grows approximately by half every year~\cite{stateless-roadmap}.
Stateless clients~\cite{stateless-clients,stateless-roadmap} were proposed to mitigate the problem of state bloat and prevent the state storage and maintenance from becoming a bottleneck for decentralization.
Stateless clients maintain an opening proof to their account balances within the Ethereum state, thus can effortlessly prove the inclusion of their accounts within the latest state.
This enables the other Ethereum clients to verify the transactions that come with opening proofs without having to download the full state and check the validity of the claimed account balances.
Since block verification now requires downloading the proofs for the relevant state elements, Verkle trees~\cite{verkle-origin,verkle,verkle-aggregation} were proposed as a replacement for Merkle trees due to their short proof size.

Each new Ethereum block contains transactions that update the state elements and their opening proofs.
Archival nodes and block producers still maintain the full state so that they can inform the stateless clients about their new opening proofs \cite{stateless-roadmap}.
For this purpose, block producers must broadcast enough information to the clients over the peer-to-peer gossip network of Ethereum\footnote{Block producers can enable the clients to succinctly verify the correctness of this information via SNARK proofs, thus still keeping the verification cost of blocks small.}.
As minimizing the proof size was paramount to decentralizing verification for blocks, minimizing the update information size becomes necessary for decentralizing the role of the block producer who has to disseminate this information.
However, reducing the length of the update information must not compromise the low overhead of stateless clients by requiring larger number of operations per proof update.
Therefore, the ideal VC scheme for stateless clients must strike a delicate balance between the size of the update information and the runtime complexity of proof updates.

In Section~\ref{sec:verkle-update}, we provide the update algorithms 
for Verkle trees given their role in supporting stateless clients.
We observe that Verkle trees do not support sublinear update, and fall under the same category as tree-based VCs with update information length $\Tilde{\Theta}(k \lambda)$.
Despite this fact, Verkle trees are highly practical in terms of updates.
In Section~\ref{sec:evaluation}, we estimate that the update information size after a typical Ethereum block does not exceed $|U| \approx 100$ kBytes (compared to the typical block size of $<125$ kBytes).
Moreover, each Verkle proof can be updated within approximately less than a second on commodity hardware.
Thus, Verkle trees outperform homomorphic Merkle tree constructions, but fall below those based on homomorphic trees.
However, they have a smaller public parameter size (a few MBytes) compared to homomorphic tree constructions (a few GBytes).

\subsubsection{Databases with frequent membership changes}

VCs with sublinear update can support databases with frequent membership changes.
When a user first registers, a message is updated to record the membership of the user.
The user receives this record and its opening proof, using which it can later anonymously prove its membership.
When the user leaves the system, the message is once again updated to delete the record.
In all these steps, membership changes result in updates to the opening proofs of other members.
When these changes are frequent, it becomes infeasible to distribute new proofs after each change.
VCs with sublinear update offer an alternative and efficient way to update the opening proofs of the users in the event of such changes.

\subsection{Related Work}
\label{sec:related-work}
There are many VC constructions, each with different guarantees regarding the proof, commitment and public parameter sizes, verification time, updateability and support for subvector openings~\cite{cfm08,ly10,cf13,TXN20,PSTY13,LRY16,LM19,JS15,GRW+20,AR20,YLF+21,rsa-vc,bilinear-VC,hyperproofs,kate,verkle} (cf~\cite{vc-sok} for an SoK of VCs).
First formalized by~\cite{cf13}, almost all VCs allow some degree of updatability.
Whereas \cite{PSTY13,AR20,rsa-vc,bilinear-VC} enable updating the commitment and the opening proofs with only the knowledge of the old and the new messages, most VCs require some structured update information beyond the messages when the users do not have access to the internal data structures.
Among the lattice-based accumulators, vector commitments and functional commitments~\cite{JS15,LLN+16,PPS21,PSTY13,QZC+14,PT11,WW22}, constructions amenable to sublinear update are presented in \cite{PT11,PSTY13,QZC+14,PPS21,ACZ+20,hyperproofs}.
Homomorphic trees are based on \cite{ACZ+20,hyperproofs,CampanelliNRZZ22}.
On the other hand, homomorphic Merkle trees were formalized and instantiated by \cite{PT11,PSTY13,QZC+14} in the context of streaming authenticated data structures and parallel online memory checking.
The construction presented in~\cite[Section 3.4]{PPS21} offers an alternative VC with sublinear update as it is not a Merkle tree, yet has the property that each inner node can be expressed as a \emph{sum} of the partial digests of individual messages.

Dynamic vector commitments have found extensive use in the context of authenticated data structures~\cite{Tamassia03}.
For instance, Hasic et al. introduce an RSA accumulator construction to achieve a dynamic authenticated dictionary with $O(n^{\epsilon})$ update and query time and a constant verification time~\cite{GTH02}.
On a similar theme, Papamanthou et al. give a tree-based hash table construction that can achieve $O(n^{\epsilon})$ update time and constant ($O(1/\epsilon)$) size proofs~\cite{PTT16}.
Akin to Verkle trees, their construction uses RSA accumulators organized in the form of a wide tree with degree $n^{\epsilon}$.
Given the multitude of authenticated data structures, Miller et al. developed a language for programming authenticated operations (\eg, binary search trees) by simply describing the data structure~\cite{MHKS14}.
Li et al. make use of the authenticated
multipoint evaluation trees (AMT) from \cite{ACZ+20} to provide authenticated storage for blockchains with significantly
reduced I/O amplifications.

An alternative design to support stateless clients is the agregatable subvector commitment (aSVC) scheme \cite{asvc}, which is a VC that enables aggregating multiple opening proofs into a succinct subvector proof.
In aSVC, each user can update its opening proof with the knowledge of the transactions in the blocks, and block producers to prove the validity of these transactions succinctly by aggregating the proofs submitted by the transacting users.
As the scheme is based on KZG commitments, no update information is needed, yet, the update time complexity is linear in the number of transactions per block.

In the light of applications such as authenticated data structures and stateless clients, Campanelli et al. identify homomorphism, proof aggregation and updatability, maintenance of the VC given new updates and reusability of the existing trusted setup as desired features for VCs~\cite{CampanelliNRZZ22}.
Inspired by \cite{LaiM19}, they introduce linear-map VCs (LVCs) which can be opened to linear maps of the committed vector, with unbounded aggregation and the other aforementioned properties, and show two pairing-based LVC constructions for inner products.
They also provide two schemes, one based on multivariate polynomials and the other univariate, that support an arbitrary memory/time trade-off for openings, implying a similar trade-off between memory and the time to update the VC data structures after changes in the committed vector.
The multivariate construction generalizes Hyperproofs~\cite{hyperproofs} by supporting trees of any degree with potentially shorter opening proofs and by having commitments for any LVC at the leaves in place of the committed vector elements.
The univariate case generalizes~\cite{ACZ+20} and enables the setup to be independent of the memory/time trade-off.
In contrast, our work offers a flexible trade-off between the size of the update information and the time to update an individual opening proof that is independent of the setup and the degree of the homomorphic (Merkle) tree.

For dynamic accumulators that support additions, deletions and membership proofs, Camacho and Hevia proved that after $k$ messages are deleted, $\Omega(k)$ bits of data must be published to update the proofs of the messages in the initial accumulated set~\cite[Theorem 1]{impossibility}.
Their lower bound is information-theoretic and follows from a compression argument.
Christ and Bonneau subsequently used a similar method to prove a lower bound on the global state size of a \emph{revocable proof system} abstraction \cite{revocable}.
As revocable proof systems can be implemented by dynamic accumulators and vector commitments, their lower bound generalizes to these primitives, \ie, after $k$ messages are updated in a dynamic VC, at least $\Omega(k)$ bits of data must be published to update the opening proofs (see Appendix~\ref{sec:appendix} for the proof).
They conclude that a stateless commitment scheme must either have a global state with linear size in the number of accounts, or require a near-linear rate of local proof updates.
In our work, we already assume a linear rate of local proof updates, \ie, after every Ethereum block or $k$ messages in our parameterization, and that the message updates are publicized by the blockchain.
We instead focus on the trade-off between the global structured update information size (beyond the published messages) and the runtime complexity of proof updates.

In terms of lower bounds, Tamassia and Triandopoulos show a $\Omega(\log{(n)})$ bound on various cost metrics (\eg, proof size, communication and computational cost of updates) for hierarchical data processing (HDP) problems with an input set of $n$ elements, where the computation can be described by a directed acyclic graph (DAG)~\cite{TT05}.
Their result follows from an analogy between these cost measures and searching an element in an ordered sequence of $n$ elements.
They also propose a tree DAG construction with nearly optimal cost metrics among all key-graph structures.
Our lower bound in Section~\ref{sec:lower-bound} applies to multiple updates (as opposed to a single update incurring $O(\log{(n)})$ computational cost) and considers the scaling of the update information as a function of the number of updates (denoted by $k$) for \emph{any} VC, not necessarily restricted to a hash-based DAG structure.

Although we assume a trusted operator maintaining the VC in this work, there has been extensive research on verifying the consistency of the responses obtained from a server for an outsourced computation, in particular for the setting of concurrently querying clients~\cite{CSS07,CO18}.
Making the update information itself verifiable remains as future work.

\section{Preliminaries}
\label{sec:preliminaries}

\subsection{Notation}
\label{sec:notation}

We denote the security parameter by $\lambda$.
An event is said to happen with \emph{negligible probability}, if its probability, as a function of $\lambda$, is $o(1/\lambda^{d})$ for all $d>0$. 
An event happens with \emph{overwhelming probability} if it happens except with negligible probability.

We denote the set $\{0,1,2,\ldots,N-1\}$ by $[N]$.
When $y = O(h(x) \polylog{(x)})$, we use the shorthand $y=\Tilde{O}(h(x))$ (similarly for $\Theta(.)$ and $\Tilde{\Theta}(.)$).
The function $H(.) \colon \mathcal{M} \to \{0,1\}^\lambda$ represents a collision-resistant hash function.
We denote the binary decomposition of an integer $x$ by $\bin(x)$, and for $c>2$, its base $c$ decomposition by $\bin_c(x)$.
A vector of $N$ elements $(n_0, \ldots, n_{N-1})$ is shown as $(n_i)_{i}$.
The notation $\mathbf{x}[i{:}j]$ denotes the substring starting at the $i^\text{th}$ index and ending at the $j^\text{th}$ index within the sequence $\mathbf{x}$.
The indicator function $1_{P}$ is equal to one if the predicate $P$ is true, otherwise, it is zero.
In the subsequent sections, $k$ will be used to denote the number of updated messages.

For a prime $p$, let $\mathbb{F}_p$ denote a finite field of size $p$.
We use $\mathbb{G}$ to denote a cyclic group of prime order $p$ with generator $g$.
The Lagrange basis polynomial for a given $x \in \mathbb{F}_p$ is denoted as $L_x(X)$:
\begin{IEEEeqnarray*}{C}
L_x(X) = \prod_{\substack{i \in \mathbb{F}_p\\i \neq x}} \frac{X-i}{x-i}
\end{IEEEeqnarray*}

We will use $|G|$ and $|H|$ to denote the maximum size of the bit representation of a single group element and a single hash value respectively.
We will use $T_G$ and $T_f$ to denote the time complexity of a single group operation and a single function evaluation for the hash functions in Section~\ref{sec:merkle-tree-homomorphic}.

\subsection{Vector Commitments}
\label{sec:vector-commitments}

A vector commitment (VC) represents a sequence of messages such that each message can be proven to be the one at its index via an \emph{opening proof}.
A dynamic vector commitment allows updating the commitment and the opening proofs with the help of an \emph{update information} when the committed messages are changed.

\begin{definition}[from~\cite{cf13}]
Dynamic (updateable) vector commitments can be described by the following algorithms:

\smallskip\noindent
$\textsc{KeyGen}(1^\lambda, N) \to pp \colon$ Given the security parameter $\lambda$ and the size $N=\poly(\lambda)$ of the committed vector, the key generation algorithm outputs public parameters $pp$, which implicitly define the message space $\mathcal{M}$.

\smallskip\noindent
$\textsc{Commit}_{pp}(m_0, \ldots, m_{N-1}) \to (C, \aux) \colon$ Given a sequence of $N$ messages in $\mathcal{M}$ and the public parameters $pp$, the commitment algorithm outputs a commitment string $C$ and the data $\aux$ required to produce the opening proofs for the messages.
Here, $\aux$ contains enough information about the current state of the VC's data structure (\ie, the current list of committed messages) to help generate the opening proofs.

\smallskip\noindent
$\textsc{Open}_{pp}(m, i, \aux) \to \pi_i \colon$ The opening algorithm is run by the committer to produce a proof $\pi_i$ that $m$ is the $i^\text{th}$ committed message. 

\smallskip\noindent
$\textsc{Verify}_{pp}(C, m, i, \pi_i) \to \{0,1\} \colon$ The verification algorithm accepts (\ie, outputs 1) or rejects a proof.  
The security definition will require that $\pi_i$ is accepted only if $C$ is a commitment to some $(m_0, \ldots, m_{N-1})$ such that $m = m_i$.

\smallskip\noindent
$\textsc{Update}_{pp}(C, (i, m_i)_{i \in [N]}, (i, m'_i)_{i \in [N]}, \aux) \to (C', U, \aux') \colon$ The algorithm is run by the committer to update the commitment $C$ when the messages $(m_{i_j})_{j \in [k]}$ at indices $(i_j)_{j \in [k]}$ are changed to $(m'_{i_j})_{j \in [k]}$.  The other messages in the vector are unchanged. It takes as input the old and the new messages, their indices and  the data variable $\aux$. It outputs a new commitment $C'$, update information $U$ and the new data variable $\aux'$.

\smallskip\noindent
$\textsc{ProofUpdate}_{pp}(C, p((i, m_i)_{i \in [N]}, (i, m'_i)_{i \in [N]}), \pi_j , m', i, U) \to \pi_j' \colon$ 
The proof update algorithm can be run by any user who holds a proof $\pi_j$ for some message at index $j$ and a (possibly) new message $m'$ at that index.
It allows the user to compute an updated proof $\pi'_j$ (and the updated commitment $C'$) such that $\pi'_j$ is valid with respect to $C'$, which contains $m'_i$, $i \in N$, as the new messages at the indices $ i \in N$ (and $m'$ as the new message at index $i$).
Here, $p(.)$ specifies what portion of the old and the new messages is sufficient to update the opening proof.
For instance, the proof update algorithm often does not need the old and the new messages in the open; but can carry out the proof update using only their differences.
In this case, $p((i, m_i)_{i \in [N]}, (i, m'_i)_{i \in [N]}) = (i, m'_i-m_i)_{i \in N}$.
\end{definition}

\emph{Correctness} of a VC requires that $\forall N = \poly(\lambda)$, for all honestly generated parameters $pp \xleftarrow{} \textsc{KeyGen}(1^\lambda, N)$, given a commitment $C$ to a vector of messages $(m_0, \ldots, m_{N-1}) \in \mathcal{M}^N$, generated by  $\textsc{Commit}_{pp}$ (and possibly followed by a sequence of updates), and an opening proof $\pi_i$ for a message at index $i$, generated by $\textsc{Open}_{pp}$ or $\textsc{ProofUpdate}_{pp}$, it holds that $\textsc{Verify}_{pp}(C, m_i, i, \pi_i)=1$ with overwhelming probability.

\emph{Security} of a VC is expressed by the position-binding property:
\begin{definition}[Definition 4 of~\cite{cf13}]
A VC satisfies position-binding if $\forall i \in [N]$ and for every PPT adversary $\mathcal{A}$, the following probability is negligible in $\lambda$:
\begin{IEEEeqnarray*}{C}
\Pr\left[\substack{\textsc{Verify}_{pp}(C, m, i, \pi_i) = 1 \land \\ \textsc{Verify}_{pp}(C, m', i, \pi'_i) = 1 \land m \neq m'}\ \colon \substack{pp \xleftarrow{} \textsc{KeyGen}(1^\lambda, N) \\ (C, m, m', \pi_i, \pi'_i) \xleftarrow{} \mathcal{A}(pp)}\right]
\end{IEEEeqnarray*}
\end{definition}

We relax the \emph{succinctness} assumption of~\cite{cf13} and denote a value to be succinct in $x$ if it is $\polylog(x)$.

Many VC constructions also satisfy the hiding property:
informally, no PPT adversary $\mathcal{A}$ should be able to distinguish whether the VC was calculated for a vector $(m_0, \ldots, m_{N-1})$ or a vector $(m'_0, \ldots, m'_{N-1}) \neq (m_0, \ldots, m_{N-1})$.
In this work, we do not consider the hiding property since it is not explicitly required by our applications, and VCs can be made hiding by combining them with a hiding commitment \cite{cf13}.

\subsection{KZG Polynomial Commitments}
\label{sec:kzg-commitments}

The KZG commitment scheme~\cite{kate} commits to polynomials of degree bounded by $\ell$ using the following algorithms:

\smallskip\noindent
$\textsc{KeyGen}(1^{\lambda}, \ell) \to pp \colon$ outputs $pp = (g, g^\tau, g^{(\tau^2)}, \ldots, g^{(\tau^\ell)})$ as the public parameters, where $g$ is the generator of the cyclic group $\mathbb{G}$ and $\tau$ is a trapdoor ($pp[i] = g^{\tau^i}$).

\smallskip\noindent
$\textsc{Commit}\bigl(pp, \phi(X)\bigr) \to (C, \aux) \colon$
The commitment to a polynomial $\phi(X) = \sum_{i=0}^{\ell-1} a_i X^i$ is denoted by $[\phi(X)]$, 
and is computed as $[\phi(X)] = \prod_{i=0}^\ell (pp[i])^{a_i}$.
The commitment algorithm outputs $C = [\phi(X)]$ and $\aux = \phi(X)$.

\smallskip\noindent
$\textsc{Open}_{pp}(m, i, \aux) \to \pi:$ 
outputs the opening proof $\pi_i$ that $\phi(i) = m$, calculated as the commitment to the quotient polynomial $(\phi(X)-\phi(i)) / (X-i)$.

\smallskip\noindent
$\textsc{Verify}(C, m, i, \pi)$ accepts if the pairing check 
$ e\left(C/g^m, g\right) = e\left(\pi,  pp[1]/g^i \right) $
holds.

\medskip\noindent
We refer to~\cite{kate} for the security analysis of this scheme.

\subsection{Merkle Trees}
\label{sec:merkle-trees}

Merkle Tree is a vector commitment using a collision-resistant hash function.
In a Merkle tree, hashes of the committed messages constitute the leaves of a $c$-ary tree of height $h = \log_c(N)$, where each inner node is found by hashing its children.
The depth of the root is set to be $0$ and the depth of the leaves is $\lceil \log_c(N) \rceil$.
The commitment function outputs the Merkle root as the commitment $C$ and the Merkle tree as $\aux$.
The opening proof for a message $m_x$ at some index $x$ is the sequence of $h(c-1)$ hashes consisting of the siblings of the inner nodes on the path from the root to the hash of the message $m_x$.
We hereafter consider binary Merkle trees ($c=2$) and assume $N=c^h = 2^h$ unless stated otherwise.
Let $u_{b_0, b_1,\ldots,b_{i-1}}$, $b_j \in \{0,1\}$, $j \in [i]$, denote an inner node at depth $i-1$ that is reached from the root by choosing the left child at depth $j$ if $b_j=0$ and the right child at depth $j$ if $b_j=1$ ($b_0=\bot$ and $u_{\bot}$ is the root). 
By definition, for a message $m_x$ at index $x$, $H(m_x) = u_{\bot,\bin(x)}$.

\subsection{Verkle Trees}
\label{sec:verkle-trees}

A Verkle tree~\cite{verkle,verkle-aggregation} is similar to a Merkle tree except that each inner node is calculated as the hash of the KZG polynomial commitment to its children.
Let $b_j \in [c]$, $j=1, \ldots, h$, denote the indices of the inner nodes on the path from the root to a leaf at index $x$, $\bin_c(x) = (b_1, \ldots, b_h)$, relative to their siblings.
Define $f_{b_0,\ldots,b_j}$, $j \in [h]$, as the polynomials determined by the children of the inner nodes on the path from the root to the leaf, where $f_{b_0}=f_\bot$ is the polynomial determined by the children of the root.
Let $C_{b_0,\ldots,b_j} = [f_{b_0,\ldots,b_j}]$, $j \in [h]$, denote the KZG commitments to these polynomials.
By definition, $u_{b_0,\ldots,b_j} = H(C_{b_0,\ldots,b_j})$, and the value of the polynomial $f_{b_0,\ldots,b_j}$ at index $b_{j+1}$ is $u_{b_0,\ldots,b_{j+1}}$ for each $j \in [h]$.
Here, $u_{b_0} = H(C_{b_0})$ is the root of the tree, and $u_{b_0,\ldots,b_h}$ equals the hash $H(m_x)$ of the message at index $x$.
For consistency, we define $C_{b_0,\ldots,b_h}$ as $m_x$.
For example, given $h = 3$ and $c = 4$, the inner nodes from the root to the message $m_{14}$ have the indices $b_0 = 0$, $b_1 = 3$ and $b_2 = 2$, and they are committed by the polynomials $f_{\bot}$, $f_{\bot,0}$ and $f_{\bot,0,3}$ respectively.

The commitment function $\textsc{Commit}_{pp}(m_0, \ldots, m_{N-1})$ outputs the root $u_{b_0}$ as the commitment $C$ and the Verkle tree itself as $\aux$.

The Verkle opening proof for the message $m_x$, $\bin(x) = (b_1, \ldots, b_h)$, consists of two parts: (i) the KZG commitments $(C_{b_0,b_1}, \ldots, C_{b_0,\ldots, b_{h-1}})$ on the path from the root to the message, and (ii) a Verkle multiproof.
The goal of the Verkle multiproof is to show that the following evaluations hold for the inner nodes from the root to the message: $f_{b_0,\ldots,b_j}(b_{j+1})=u_{b_0,\ldots,b_{j+1}} = H(C_{b_0,\ldots,b_{j+1}})$, $j \in [h]$.
It has two components: (i) the commitment $[g(X)]$ and (ii) the opening proof $\pi'$ for the polynomial $h(X)-g(X)$ at the point $t=H(r,[g(X)])$, where
\begin{IEEEeqnarray*}{C}
g(X)=\sum_{j=0}^{h-1} r^j \frac{f_{b_0,\ldots,b_j}(X)-u_{b_0,\ldots,b_{j+1}}}{X-b_{j+1}}, \quad\ h(X)=\sum_{j=0}^{h-1} r^j \frac{f_{b_0,\ldots,b_{j}}(X)}{t-b_{j+1}},
\end{IEEEeqnarray*}
and $r=H(C_{b_0},..,C_{b_0,\ldots,b_{h-1}},u_{b_0,b_1},..,u_{b_0,\ldots,b_h},b_1,..,b_h)$.
Thus, $\textsc{Open}_{pp}(m, i, \aux)$ outputs $((C_{b_0,b_1}, \ldots, C_{b_0,\ldots,b_{h-1}}), ([g(X)], \pi'))$.

To verify a Verkle proof $\pi = ((C_{b_0,b_1}, \ldots, C_{b_0,\ldots,b_{h}}), (D,\pi'))$, the verification algorithm $\textsc{Verify}_{pp}(C, m, x, \pi)$ first computes $r$ and $t$ using $u_{b_0,\ldots,b_j} = H(C_{b_0,\ldots,b_j})$, $j \in [h]$, and $u_{b_0,\ldots,b_h} = H(m)$.
Then, given the indices $\bin(x) = (b_1, \ldots, b_h)$ and the commitments $(C_{b_0,b_1}, \ldots, C_{b_0,\ldots,b_{h}})$, it calculates
\begin{IEEEeqnarray*}{C}
y = \sum_{j=0}^{h-1} r^j \frac{C_{b_0,\ldots,b_j}}{t-b_{j+1}} \quad\quad\quad E = \sum_{j=0}^{h-1} \frac{r^j}{t-b_{j+1}} C_{b_0,\ldots,b_j}.
\end{IEEEeqnarray*}
Finally, it returns true if the pairing check $e(E-D-[g(X)],[1]) = e(\pi', [X-t])$ is satisfied.

As the degree $c$ of a Verkle tree increases, size of the opening proofs and the runtime of the verification function decreases in proportion to the height $h = \log_c{N}$ of the tree.
This enables Verkle trees to achieve a short opening proof size for large number of messages (as in the case of the Ethereum state trie) by adopting a large degree (\eg, $c=256$).
In comparison, each Merkle proof consists of $(c-1) \log_c{N}$ inner nodes, which grows linearly as $c$ increases.

\section{Formalizing the Dichotomy of VCs}
\label{sec:problem-statement}

We first analyze the trade-off between the number of operations required by proof updates and the size of the update information $U$ by inspecting different types of dynamic VCs.
Recall that the number of updated messages is $k \leq N$.

\subsection{Updating KZG Commitments and Opening Proofs}
\label{sec:kzg-update}

In the subsequent sections, we assume that each user has access to a dictionary of KZG commitments to the Lagrange basis polynomials $L_i(X)$, $i \in \mathbb{F}_p$, and for each polynomial, its opening proofs at each point $j \in \mathbb{F}_p$, $j < N$.
With the help of this table, one can instantiate a KZG based VC to the messages $(m_i)_{i \in [N]}$, by treating them as the values of the degree $N$ polynomial $\phi(X)$ at inputs $i \in \mathbb{F}_p$, $i<N$.
We next analyze the complexity of the update information and the proof updates in this VC.
The update and proof update algorithms are described by Alg.~\ref{alg.kzg} in Appendix~\ref{sec:appendix-algorithms}.

\subsubsection{Update Information}
Suppose the vector $(i, m_i)_{i \in [N]}$ is updated at some index $i$ such that $m'_i \xleftarrow{} m_i + \delta$ for some $\delta \in \mathbb{F}_p$.
Then, the polynomial $\phi(X)$ representing the vector is replaced by $\phi'(X)$ such that $\phi'(X) = \phi(X)$ if $X \neq i$, and $\phi'(i) = \phi(i) + \delta$ at $X = i$. 
Thus, the new KZG commitment $C'$ to $\phi'(X)$ is constructed from the commitment $C$ to $\phi(X)$ as follows:
\begin{IEEEeqnarray*}{C}
[\phi'(X)] = [\phi(X)+\delta L_i(X)] = [\phi(X)][L_i(X)]^{\delta}
= C \cdot [L_i(X)]^{\delta} = C \cdot [L_i(X)]^{m'_i-m_i}
\end{IEEEeqnarray*}
If the vector is modified at $k$ different indices $i_1,...,i_k$ from message $m_{i_j}$ to $m'_{i_j}$, $j \in [k]$, then
the new commitment $C' = [\phi'(X)]$ becomes
\begin{IEEEeqnarray*}{rCl}
\left[\phi(X)+\sum_{j=1}^k (m'_{i_j}-m_{i_j}) L_{x_{i_j}}(X)\right] &=& [\phi(X)] \prod_{j=1}^k[L_{i_j}(X)]^{(m'_{i_j}-m_{i_j})} \\
&=& C \prod_{j=1}^k[L_{i_j}(X)]^{(m'_{i_j}-m_{i_j})}.
\end{IEEEeqnarray*}
Thus, the commitment can updated given only the old and the new messages at the updated indices, besides the table.

\subsubsection{Proof Update}
Let $\pi_{x}$ denote the opening proof of a polynomial $\phi(X)$ at a point $(x,m_{x})$.
When $k$ messages are updated, the new opening proof $\pi'_{x}$ can be found as a function of the old proof $\pi_{x}$ and the opening proofs $\pi_{i_j,x}$ of the Lagrange basis polynomials $L_{i_j}(X)$, $j \in [k]$, at the index $x$ ($m'_{x} = m_{x}+\sum_{j=1}^k (m'_{i_j}-m_{i_j}) \cdot 1_{x=i_j}$ is the new value of $m_{x}$ after the $k$ updates):
\begin{IEEEeqnarray*}{rCl}
\pi'_x &=& \left[\frac{\phi'(X)-m_{x}-\sum_{j=1}^k \delta_j \cdot 1_{x=i_j}}{X-x}\right] \\
&=& \pi_{x} \prod_{j=1}^k \left[\frac{L_{i_j}(X)-L_{i_j}(x)}{X-x}\right]^{m'_{i_j}-m_{i_j}}
= \pi_x \prod_{j=1}^k \pi^{m'_{i_j}-m_{i_j}}_{i_j,x}
\end{IEEEeqnarray*}
Thus, the proof can updated given only the old and the new messages at the updated indices, besides the table.
The update information is set to be the empty set, \ie, $U = \emptyset$.

\subsubsection{Complexity}
The size of the update information is constant, \ie, $\Tilde{\Theta}(1)$.
Each user can update its proof after $k$ accesses to the dictionary, and in the worst case, $\Theta(k \log{|\mathcal{M}|}) = \Tilde{\Theta}(k)$ group operations as $\log{(m'_i-m_i)} \leq \log{|\mathcal{M}|}$ for all $i \in [N]$.

\subsection{Updating Merkle Trees and Opening Proofs}
\label{sec:merkle-update}

We next consider a Merkle tree and analyze the complexity of the update information size and the runtime for proof updates.
A simple update scheme would be recalculating the new Merkle tree given all of the old messages or the old inner nodes of the Merkle tree, and the message updates.
However, this implies a large complexity for the runtime of the proof update algorithm that scales as $\Omega(k)$ when users keep track of the inner nodes, and as $\Omega(N)$ when the users recalculate the tree from scratch at each batch of updates.
Moreover, in many applications, the users do not have access to any messages or inner nodes besides those that are part of the Merkle proof held by the user.
Hence, in the following sections, we describe update and proof update algorithms
that reduce the runtime complexity of the proof updates at the expanse of larger update information (Alg.~\ref{alg.merkle} in Appendix~\ref{sec:appendix-algorithms}).

\subsubsection{Update Information}
Suppose the vector $(i, m_i)_{i \in [N]}$ is updated at some index $x$, $(b_1,\ldots,b_h) = \bin(x)$, to $m'_x$.
Then, the root $C=u_{b_0}$ and the inner nodes $(u_{b_0,b_1}, \ldots, u_{b_0,b_1,\ldots,b_h})$, $(b_1,\ldots,b_h) = \bin(i)$, must be updated to reflect the change at that index.
Given the old inner nodes, the new values for the root and these inner nodes, denoted by $C'=u'_{b_0}$ and $(u'_{b_0,b_1}, \ldots, u'_{b_0,b_1,\ldots,b_h})$, are calculated recursively as follows:
\begin{IEEEeqnarray*}{rCl}
u'_{b_0,b_1,\ldots,b_h} &\xleftarrow[]{}& H(m'_x), \\
u'_{b_0,b_1,\ldots,b_j} &\xleftarrow[]{}^&
\begin{cases}
    H(u'_{b_0,b_1,\ldots,b_j,0}, u_{b_0,b_1,\ldots,b_j,1}) &\text{if } b_{j+1} = 0,\ j<h \\
    H(u_{b_0,b_1,\ldots,b_j,0}, u'_{b_0,b_1,\ldots,b_j,1}) &\text{if } b_{j+1} = 1,\ j<h \\
\end{cases}
\end{IEEEeqnarray*}
When the messages are modified at $k$ different points $i_j$, $j \in [k]$, the calculation above is repeated $k$ times for each update.

As the updated inner nodes are parts of the Merkle proofs, the update information consists of the new values at the inner nodes listed from the smallest to the largest depth in the canonical left to right order.
For instance,
$U = ((\bot, u'_{\bot}), (\bot0, u'_0), (\bot1, u'_1), (\bot00, u'_{00}), (\bot10, u'_{10}), \ldots)$
implies that the root $u_{\bot}$ and the inner nodes $u_{\bot0}$, $u_{\bot1}$, $u_{\bot00}$ and $u_{\bot10}$ were updated after $k$ messages were modified at the leaves of the Merkle tree.
We reference the updated inner nodes using their indices (\eg, $U[b_0, b_1 \ldots b_j] = v$, when $(b_1 \ldots b_j, v) \in U$).

\subsubsection{Proof Update}
The Merkle proof $\pi_x$ for a message at index 
$x$, $(b_1, \ldots, b_h) = \bin(x)$, is the sequence $(u_{\overline{b}_1}, u_{b_1,\overline{b}_2}, \ldots, u_{b_1,b_2,\ldots,\overline{b}_h})$.
When $k$ messages are updated, some of the inner nodes within the proof might have changed.
A user holding the Merkle proof for index $x$ can find the new values of these inner nodes by querying the update information with their indices.

\subsubsection{Complexity}
Upon receiving the update information $U$, each user can update its proof in $\Theta(\log^2{(N)}+|H| \log{(N)}) = \Tilde{\Theta}(1)$ time by running a binary search algorithm to find the updated inner nodes within $U$ that are part of its Merkle proof, and reading the new values at these nodes.
Since modifying each new message results in $h = \log{(N)}$ updates at the inner nodes and some of the updates overlap, $|U| = \Theta(k \log{(N/k)} (\log{(N)}+|H|)) = \Tilde{\Theta}(k)|H|$, as each updated inner node is represented by its index of size $\Theta(\log{(N)})$ and its new value of size $|H|$ in $U$.

\subsection{Dichotomy of VCs}
\label{sec:goal}

In the case of KZG commitments, $|U| = \Tilde{\Theta}(1)$, and there is no information overhead on top of the message updates.
For Merkle trees with an efficient proof update algorithm, $|U| = \Tilde{\Theta}(k)|H|$,
thus there is an extra term scaling in $\Tilde{\Theta}(k)|H| = \Tilde{\Theta}(k)\lambda$, since $|H| = \Omega(\lambda)$ for collision-resistant hash functions.
In contrast, for KZG commitments, each user has to do $\Tilde{\Theta}(k)$ group operations to update its opening proof; whereas in Merkle trees, each user can update its proof in $\Tilde{\Theta}(1)$ time, which does not depend on $k$.
Hence, KZG commitments outperform Merkle trees in terms of the update information size, whereas Merkle trees outperform KZG commitments in terms of the time complexity of proof updates.
Table~\ref{tab:vc-schemes} generalizes this observation to a dichotomy between algebraic VC schemes and tree-based ones favoring shorter runtimes for proof updates.
The algebraic and tree-based ones outperform each other in terms of the update information size and runtime complexity respectively.

\section{Vector Commitments with Sublinear Update}
\label{sec:sublinear-complexity-efficient}

We would like to resolve the separation in Table~\ref{tab:vc-schemes} and obtain a vector commitment, where both the size of the update information and the complexity of proof updates have a sublinear dependence on $k$.
In particular, $|U| = \Tilde{\Theta}(g_1(k)\lambda)$ in the worst case, 
and the proof update algorithm requires at most $\Tilde{\Theta}(g_2(k))$ operations, where both $g_1(k)$ and $g_2(k)$ are $o(k)$.
We say that such a VC supports \emph{sublinear update}.

In this section, we describe a family of VCs with sublinear update,parameterized by the values $\nu \in (0,1)$ and characterized by the functions $(g_1,g_2) = (k^{\nu}, k^{1-\nu})$.

\subsection{Homomorphic Trees}
\label{sec:tree-homomorphic}

We first introduce homomorphic trees where messages placed in the leaves take values in a set $\mathcal{M}$, and those at the inner nodes take value in a set $\mathcal{D}$.
Here, $\mathcal{M}$ is a finite field $\mathbb{F}_q$ of size $q$, and $\mathcal{D}$ is a cyclic group $\mathbb{G}$ of prime order $q$.
Define $\bin(i)[0:-p]$ as the empty string $\epsilon$ for $p \in \mathbb{Z}^+$.
The homomorphic property of these trees refers to the fact that there are efficiently computable functions
\[
   h_{i,j,c}: \mathcal{M} \to \mathcal{D}  \qquad \text{for $i \in [N]$ and $j \in [h]$ and $c \in \{0,\ldots,2^p-1\}$}
\]
parameterized by some $p \in \mathbb{Z}^+$, such that every inner node $u_{b_0,\ldots,b_j} \in \mathcal{D}$ can be expressed as
\begin{IEEEeqnarray*}{llCl}
\text{for the root: } & u_{b_0} &=& \hspace{2.5em} \sum_{i \in [N]} \hspace{0.5em} h_{i,0,\epsilon}(m_i) \\*
\text{for the inner nodes: } & u_{b_0,\ldots,b_j} &=& \hspace{-1em} \sum_{i\colon\bin(i)[0{:}j-p-1]=(b_1,\ldots,b_{j-p})} \hspace{-3em} h_{i,j,(b_{j-p+1},\ldots,b_j)}(m_i).
\end{IEEEeqnarray*}
Here, the parameter $p$ determines the dependence of the inner nodes on the underlying messages.
When $p = 0$, each inner node is a homomorphic function of the messages under the node.
When $p = 1$, each inner node is a homomorphic function of the messages under its parent, \ie, under both the node itself and its sibling.
We refer to the function $h_{i,j,(b_{j-p+1},\ldots,b_j)}(.)$ as a \emph{partial digest function} and refer to $h_{i,j,(b_{j-p+1},\ldots,b_j)}(m_i)$ as the \emph{partial digest} of $m_i$ on the $(b_{j-p+1},\ldots,b_j)$-th descendant at depth $j$ (measured from the root) of the node at depth $j-p$ that is on the path from $m_i$ to the root.
In other words, $h_{i,j,(b_{j-p+1},\ldots,b_j)}(m_i)$ denotes the effect of the message $m_i$ on the sibling and `cousins' of the node at depth $j$ that lay on the path from $m_i$ to the root.
Let $h$ denote the height of the homomorphic tree.
(Note that the equations above use the additive notation for the group operation. 
When applied to group elements, `$+$' denotes the group operation, and when applied to field elements, it denotes the field's addition operation.) 

As an example, consider a homomorphic tree for a VC that commits to four messages $m_0,m_1,m_2,m_3$.
Suppose $p = 0$.
Then, its root $u_{\bot}$ and inner nodes $u_{\bot,0}$, $u_{\bot,1}$, $u_{\bot,0,0}$, $u_{\bot,0,1}$, $u_{\bot,1,0}$, $u_{\bot,1,1}$ can be calculated as follows:
\begin{IEEEeqnarray*}{rClrCl}
u_{\bot} &=& h_{0,0,\epsilon}(m_0) + h_{1,0,\epsilon}(m_1) + h_{2,0,\epsilon}(m_2) + h_{3,0,\epsilon}(m_3)\ ; \qquad   &    u_{\bot,0,0} &=& h_{0,2,\epsilon}(m_0) \\*
u_{\bot,0} &=& h_{0,1,\epsilon}(m_0) + h_{1,1,\epsilon}(m_1)\ ;                                      &    u_{\bot,0,1} &=& h_{1,2,\epsilon}(m_1) \\*
u_{\bot,1} &=& h_{2,1,\epsilon}(m_2) + h_{3,1,\epsilon}(m_3)\ ;                                      &    u_{\bot,1,0} &=& h_{2,2,\epsilon}(m_2) \\*
&&                                                                              &    u_{\bot,1,1} &=& h_{3,2,\epsilon}(m_3)
\end{IEEEeqnarray*}
On the other hand, when $p = 1$, the root $u_{\bot}$ and inner nodes $u_{\bot,0}$, $u_{\bot,1}$, $u_{\bot,0,0}$, $u_{\bot,0,1}$, $u_{\bot,1,0}$, $u_{\bot,1,1}$ can be calculated as follows:
\begin{IEEEeqnarray*}{rCl}
u_{\bot} &=& h_{0,0,\epsilon}(m_0) + h_{1,0,\epsilon}(m_1) + h_{2,0,\epsilon}(m_2) + h_{3,0,\epsilon}(m_3) \\*
u_{\bot,0} &=& h_{0,1,0}(m_0) + h_{1,1,0}(m_1) + h_{2,1,0}(m_2) + h_{3,1,0}(m_3) \\*
u_{\bot,1} &=& h_{0,1,1}(m_0) + h_{1,1,1}(m_1) + h_{2,1,1}(m_2) + h_{3,1,1}(m_3) \\*
\end{IEEEeqnarray*}
\vspace{-1cm}
\begin{IEEEeqnarray*}{rClrCl}
u_{\bot,0,0} &=& h_{0,2,0}(m_0) + h_{1,2,0}(m_1)\ ;\quad                                      &    u_{\bot,0,1} = h_{0,2,1}(m_0) + h_{1,2,1}(m_1) &&\\*
u_{\bot,1,0} &=& h_{2,2,0}(m_2) + h_{3,2,0}(m_3)\ ;\quad                                      &    u_{\bot,1,1} = h_{2,2,1}(m_2) + h_{3,2,1}(m_3) &&
\end{IEEEeqnarray*}
It now follows that when a message $m_x$ is updated to $m'_x$, each inner node $u_{b_0,\ldots,b_j}$ such that $\bin(x)[0{:}j-p-1]=(b_1,\ldots,b_{j-p})$ can be updated from $u_{b_0,\ldots,b_j}$ to $u'_{b_0,\ldots,b_j}$ using the functions $h_{i,j}$ as follows:
\begin{IEEEeqnarray*}{rCl}
    u'_{b_0,\ldots,b_j} &=&  
         h_{i,j,(b_{j-p+1},\ldots,b_j)}(m'_x) + \hspace{-2em}  \sum_{\substack{i \neq x\colon \\ i\colon\bin(i)[0{:}j-p-1]=(b_1,\ldots,b_{j-p})}} \hspace{-3em} h_{i,j,(b_{j-p+1},\ldots,b_j)}(m_i) \\
    &=& u_{b_0,\ldots,b_j} + h_{i,j,(b_{j-p+1},\ldots,b_j)}(m'_x) - h_{i,j,(b_{j-p+1},\ldots,b_j)}(m_x)
\end{IEEEeqnarray*}
When the partial digest functions are homomorphic in their input, the expression above can be written as
\[h_{i,j,(b_{j-p+1},\ldots,b_j)}(m'_x) - h_{i,j,(b_{j-p+1},\ldots,b_j)}(m_x) = h_{i,j,(b_{j-p+1},\ldots,b_j)}(m'_x-m_x).\]
This lets us calculate the updated internal node using only the knowledge of the message diff $m_i'-m_i$.
We provide examples of homomorphic tree constructions in Section~\ref{sec:tree-homomorphic-examples} with homomorphic partial digest functions.
In our general formulation, we assume that the opening proof $\pi_i$ for each message $m_i$ at index $i$ is a function of the inner nodes (and/or their siblings) on the path from the message to the root.

Unlike in Section~\ref{sec:merkle-update}, homomorphic trees enable calculating the new inner nodes after message updates using \emph{only} the new and the old updated messages, in particular using only their difference.
Hence, we can construct a tree that achieves the same complexity for the update information size as algebraic VCs, albeit at the expanse of the proof update complexity, \emph{without requiring the users to keep track of the old messages or to calculate the tree from scratch given all messages} (see Appendix~\ref{sec:why-homomorphic} for further discussion).
This is in contrast to Merkle treess based on SHA256.
The update and proof update algorithms of such a homomorphic tree with no structured update information and the same asymptotic complexity as algebraic VCs is described in Appendix~\ref{sec:homomorphic-no-update-information}.
Since the homomorphic trees can achieve both extremes in terms of update information size and update runtime (Table~\ref{tab:vc-schemes}), with a smart structuring of the update information, they can support sublinear update.
We show how in the next subsection.

\subsection{Structuring the Update Information}
\label{sec:structuring-update-information-ht}

\begin{algorithm}[ht!]
    \captionsetup{font=small} 
    \caption{Algorithms for updating a homomorphic tree. Each user knows the total number of leaves $N$. The recursive algorithm $\textsc{UpdateNode}$, parameterized by $\nu \in [0,1]$ and $p$, takes an index as input, and checks if the new value of the node at that index is to be published as part of the update information $U$. If so, it appends the new value to $U$, and recursively calls itself on the children of the node. Not all of $U$ and $(i, m'_{i}-m_{i})_{i \in [N]}$ are passed to the proof update algorithm and its relevant parts are read selectively to keep the runtime at a minimum.
    Here, $p$ denotes the dependency of the inner node on the messages underlying its parents.
    For instance, $p = 0$ means that the inner nodes depends only on the messages under itself, whereas $p = 1$ means that it depends on the messages under its parent.
    }
    \label{alg.vc.sublinear.p}
    \begin{algorithmic}[1]\footnotesize
    \Alg{\sc Update}{$C, (i, m'_{i}-m_{i})_{i \in [N]}$, \T, $p$}
        \Let{U}{\textsc{Empty}()}
        \Alg{\sc UpdateNode}{$\mathsf{idx}, p$}
            \Let{b_0,\ldots,b_d}{\mathsf{idx}}
            \Let{\mathcal{S}}{\{j \in [k] \colon\bin(i_j)[0:d-p-1] = (b_1, \ldots, b_{d-p})\}}
            \If{$|\mathcal{S}| > k^{1-\nu}$}
                \Let{\T[b_0, \ldots, b_d]}{\T[b_0, \ldots, b_d] \cdot \prod_{j \in \mathcal{S}} h_{i_j,d,(b_{d-p+1},b_d)}(m'_{i_j}-m_{i_j})}
                \Let{U[b_0, b_1, \ldots, b_d]}{\T[b_0, b_1, \ldots, b_d]}
                \State $\textsc{UpdateNode}((b_0,\ldots,b_d,0))$
                \State $\textsc{UpdateNode}((b_0,\ldots,b_d,1))$
            \EndIf
        \EndAlg
        \State $\textsc{UpdateNode}((b_0))$
        \Let{C'}{\mathrm{new}\ \mathrm{VC}} \Comment{Commitment need not be the same as the root.} \label{line:vc-update}
        \State\Return $(C', U, \T)$
    \EndAlg
    \Alg{\sc ProofUpdate}{$C, \pi_x , m'_x, x, U, p$}
        \Let{\pi'_x}{\{\}}
        \Let{(b_1,\ldots,b_h)}{\bin(x)}
        \For{$d = h, \ldots, 1$}
            \If{$(b_0,b_1 \ldots\overline{b}_d) \in U$}
                \Let{\pi'_x[b_0,b_1, \ldots, b_d]}{U[b_0,b_1, \ldots, b_d]}
            \Else
                \Let{\mathcal{S}}{\{j \in [k] \colon\bin(i_j)[0:d-p-1] = (b_1, \ldots, b_{d-p})\}}
                \Let{\pi'_x[b_0,\ldots,b_d]}{\pi_x[b_0,\ldots,b_d] \cdot \prod_{j \in \mathcal{S}} h_{i_j,d, (b_{d-p+1},b_d)}(m'_{i_j}-m_{i_j})}
            \EndIf
        \EndFor
        \State\Return $\pi'_x$
    \EndAlg
    \end{algorithmic}
\end{algorithm}

We next describe the update and proof update algorithms that enable VCs based on homomorphic trees to achieve sublinear complexity as a function of the parameter $\nu$ (Alg.~\ref{alg.vc.sublinear.p}).
These algorithms are parameterized by $p$ that denotes the dependence of the inner nodes on the messages under other nodes.

\subsubsection{Update Information}
When the messages $(i_j, m_{i_j})_{j \in [k]}$ change to $(i_j, m'_{i_j})_{j \in [k]}$, the update information $U$ is generated recursively using the following algorithm:
\begin{enumerate}
    \item Start at the root $u_{b_0}$. 
    Terminate the recursion at an inner node if there are $k^{1-\nu}$ or less updated messages under the $p$-th parent of that node (or under the root if the node is at a depth less than $p$ measured from the root).
    Here, the $0$-th parent refers to the node itself, the $1$-st parent is the node's direct parent, the $2$-nd parent is its parent's parent and so on.
    \item If the recursion is not terminated at a node, then publish the new value of the node as part of $U$, and apply the same algorithm to the left and the right children of the node.
\end{enumerate}
The new values of the inner nodes included in $U$ are listed from the smallest to the largest depth in the canonical left to right order.

\subsubsection{Proof Update}
When the messages $(i_j, m_{i_j})_{j \in [k]}$ are updated to $(i_j, m'_{i_j})_{j \in [k]}$, a user first retrieves the inner nodes within its opening proof that are published as part of the update information.
It then calculates the non-published inner nodes within the proof using the partial digests.
For instance, consider a user with the proof $\pi_{x} = (N, x, u_{b_0}, u_{b_0,b_1}, \ldots, u_{b_0,b_1,\ldots,b_h})$ for some message $m_x$, $(b_1, \ldots, b_h) = \bin(x)$.
To update the proof, the user first replaces the inner nodes whose new values are provided by $U$: 
$u'_{b_0,\ldots,b_d} \xleftarrow[]{} U[b_0, \ldots, b_d]$, $d \in [h]$, if $U[b_0, \ldots, b_d] \neq \bot$. 
Otherwise, the user finds the new values at the nodes $u_{b_0,\ldots,b_d}$, $d \in [h]$, using the functions $h_{x,d,c}$:
\begin{IEEEeqnarray*}{llCl}
& \mathcal{S} &=& \{j \colon \bin(i_j)[0:d-p-1] = (b_1, \ldots, b_{d-p})\} \\
& u'_{b_0, \ldots, b_{d}} &=& u_{b_0, \ldots, b_{d}} \cdot \prod_{j \in \mathcal{S}} h_{i_j,d,(b_{d-p+1},\ldots,b_d)}(m'_{i_j} -m_{i_j}) 
\end{IEEEeqnarray*}

\subsubsection{Complexity}
Finally, we prove bounds on the size and time complexity for proof updates. 
\begin{theorem}
\label{thm:complexity-homomorphic-tree}
Complexity of the update information size and the runtime of proof updates are as follows: $g_1(k) = k^\nu$ and $g_2(k) = k^{1-\nu}$.
\end{theorem}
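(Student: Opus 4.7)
The plan is to bound the two quantities separately, using the key observation that motivates Alg.~\ref{alg.vc.sublinear.p}: a node is written into $U$ exactly when its $p$-th ancestor has strictly more than $k^{1-\nu}$ updated messages underneath it; otherwise the recursion terminates. This creates a sharp dichotomy --- every inner node is either already recorded in $U$ (and can be looked up in $\Tilde{O}(1)$ time), or its $p$-th ancestor has at most $k^{1-\nu}$ updated messages under it (and it can be recomputed from a short partial-digest sum).

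For $|U|$, I would count published nodes depth by depth. Fix a depth $d \ge p$ and look at the $p$-th ancestor (at depth $d-p$) of a candidate node at depth $d$. Every one of the $k$ updated messages is a descendant of exactly one node at depth $d-p$, so at most $k/k^{1-\nu} = k^\nu$ such ancestors can be \emph{heavy} (have more than $k^{1-\nu}$ updates underneath). Each heavy ancestor contributes at most $2^p$ of its descendants at depth $d$ to $U$, and for depths $d < p$ there are only $O(2^p)$ nodes in total. Summing over the depths $d \in \{0,1,\ldots,h\}$ with $h = \log N$, the number of published nodes is $O(2^p k^\nu \log N) = \Tilde{O}(k^\nu)$. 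Since each entry stores an index ($O(\log N)$ bits) together with an element of $\mathcal{D}$ ($\Theta(\lambda)$ bits), $|U| = \Tilde{O}(k^\nu \lambda)$. A matching $\Tilde{\Omega}(k^\nu)$ worst case is realized by distributing the $k$ updates evenly over $k^\nu$ distinct well-separated subtrees so that each of their roots becomes heavy.

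For the proof update runtime, fix a user whose proof $\pi_x$ comprises the $h = \log N$ relevant inner nodes on the sibling path, and iterate over them as in Alg.~\ref{alg.vc.sublinear.p}. If a node is in $U$, a look-up inside $U$ (whose size was just bounded) costs $\Tilde{O}(1)$. Otherwise, by the publishing rule, the $p$-th ancestor of that node has at most $k^{1-\nu}$ updated descendants, so the set $\mathcal{S}$ computed in the ``Else'' branch has cardinality at most $k^{1-\nu}$, and the product requires at most $k^{1-\nu}$ partial-digest evaluations on message diffs. Aggregating over the $h = \log N$ nodes in $\pi_x$ gives $\Tilde{\Theta}(k^{1-\nu})$ operations.

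The only genuinely delicate step is the counting argument for $|U|$: one has to verify that the $p$-th-ancestor indirection does not break the heavy-node bound when $d < p$ (handled by the trivial $2^p$ slack and the $O(2^p)$ trivial count at small depths), and that the recursion visits each inner node at most once, so no double-counting occurs. Everything else --- the $\Tilde{O}(1)$ lookup, the homomorphic partial-digest cost, and the geometric sum over depths --- is routine bookkeeping.
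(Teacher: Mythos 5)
Your proposal is correct and follows essentially the same argument as the paper: both rest on the observation that the nodes at a fixed depth partition the $k$ updates into disjoint subtrees, so at most $k/k^{1-\nu}=k^\nu$ of the $p$-th ancestors can be heavy, giving $O(2^p k^\nu \log N)$ published nodes, and both bound the update time by $\Tilde{O}(1)$ lookups for published nodes plus at most $k^{1-\nu}$ partial-digest additions for unpublished ones. The only cosmetic difference is that you count heavy ancestors depth by depth while the paper counts via a maximal frontier set $\mathcal{P}$ of heavy parents together with their root-paths and $p$-th descendants; the two bookkeeping schemes yield the same bound.
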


\begin{proof}
Consider the set $\mathcal{P}$ consisting of the $p$-th parents of the nodes in $U$ such that no child of a node in $\mathcal{P}$ is the $p$-th parent of a node in $U$.
By definition, there are over $k^{1-\nu}$ updated messages within the subtree rooted at the $p$-th parent of each node $u \in U$.
Since there are $k$ updated messages in total, and by definition of $\mathcal{P}$, the subtrees rooted at the nodes in $\mathcal{P}$ do not intersect at any node, there must be less than $k/k^{1-\nu} = k^\nu$ inner nodes in $\mathcal{P}$.
Since the total number of published inner nodes is given by the nodes in $\mathcal{P}$, the nodes on the path from the root to each node in $\mathcal{P}$ and the descendants of the nodes in $\mathcal{P}$ up to the $p$-th descendants, this number is bounded by $2^{p} k^\nu \log{(N)} = \Tilde{\Theta}(k^\nu)$.
Hence,
$|U| = \Theta(2^{p} k^\nu \log{(N)}(\log{(N)}+|G|)) = \Tilde{\Theta}(k^\nu)|G| = \Tilde{\Theta}(k^\nu) \lambda$,
which implies $g_1(k) = k^\nu$.

For each inner node in its opening proof, the user can check if a new value for the node was provided as part of $U$, and replace the node if that is the case, in at most $\Theta(\log{(N)}+|G|)$ time by running a binary search algorithm over $U$.
On the other hand, if the new value of a node in the proof is not given by $U$, the user can calculate the new value by adding the partial digests of $k^{1-\nu}$ message updates to the old value of the node.
This is because there can be at most $k^{1-\nu}$ updated messages within the subtree rooted at the $p$-th parent of an inner node, whose new value was not published as part of $U$.
Assuming that each partial digest can be calculated and added to each inner node on an opening proof in constant time $T_G$, the total time complexity of a proof update becomes at most 
\begin{IEEEeqnarray*}{C}
\Theta(\log{(N)}(\log{(N)}+|G|+k^{1-\nu}T_G)) = \Tilde{\Theta}(k^{1-\nu}) T_G,
\end{IEEEeqnarray*}
which implies $g_2(k) = k^{1-\nu}$.
\end{proof}

\begin{figure}
    \centering
    \includegraphics[width=0.8\linewidth]{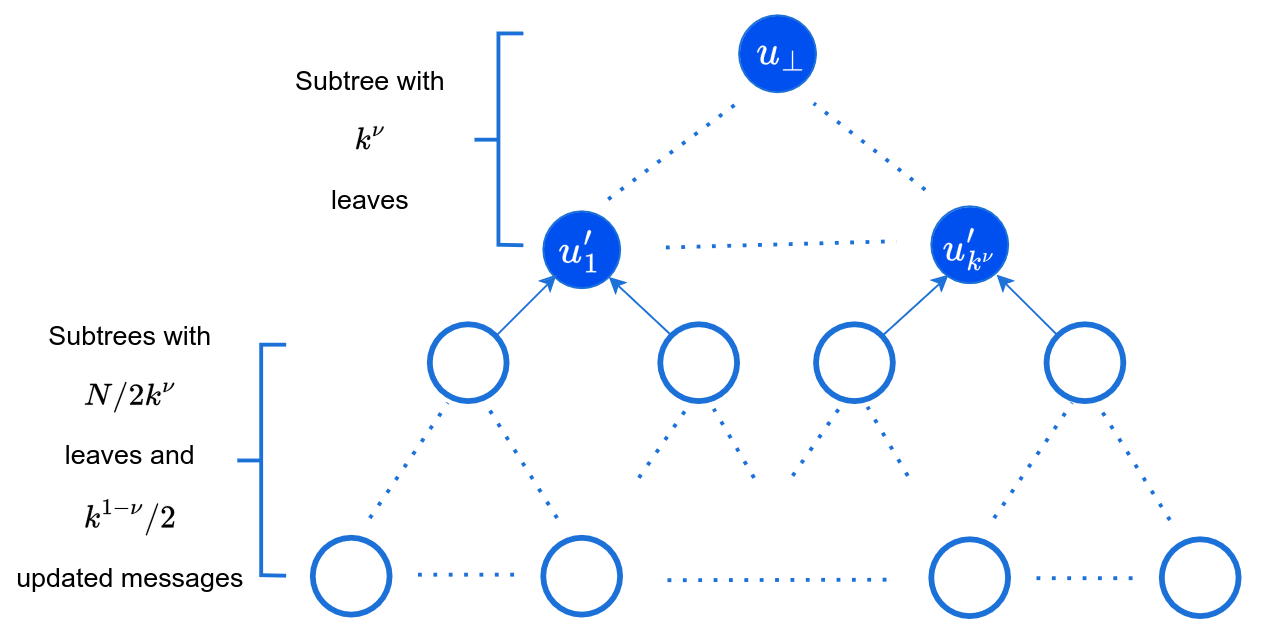}
    \caption{\small Homomorphic tree example for $p = 0$. The new values of the inner nodes with solid blue color are published as part of the updated information.}
    \label{fig:homomorphic-merkle-tree}
\end{figure}

To illustrate the proof above, consider the homomorphic tree in Figure~\ref{fig:homomorphic-merkle-tree} where $k$ messages are updated.
Suppose $p = 0$, \ie, each inner node depends only on the nodes within its subtree.
Suppose there are $k^{1-\nu}/2$ updated messages among the first $N/2k^\nu$ messages $m_0, \ldots, m_{N/2k^\nu-1}$, another $k^{1-\nu}/2$ updated messages among the second $N/2k^\nu$ messages $m_{N/2k^\nu}, \ldots m_{2N/2k^\nu - 1}$ and so on.
In this case, the algorithm identifies the inner nodes within the subtree at the top of the tree (whose nodes are denoted in solid blue) and publishes their new values as part of the update information (when $p = 1$, the algorithm identifies and publishes the \emph{children} of the solid blue nodes).
This is because there are $k^{1-\nu}$ updated messages under each inner node and leaf of this subtree, denoted by $u'_{i}$, $i = 1, \ldots, k^\nu$, whereas under the children of these leaf nodes there are less than $k^{1-\nu}$ updated messages.
Thus, each user can update its opening proof by downloading the new values of the top $\log{k^\nu}$ inner nodes within its proof from the update information. 
There are at most $k^{1-\nu}/2$ updated messages under each of the remaining $\log{N/k^\nu}$ nodes in the proof; hence, the user can find their updated values in $\Theta(\log{(N)}k^{1-\nu})$ time.
Note that in this example, and in general when the updated messages are distributed uniformly among the leaves, the size of the update information becomes $\Theta(k^\nu)\lambda$ rather than $\Theta(k^\nu \log{N}) \lambda$.

\subsection{Constructions for Homomorphic Trees}
\label{sec:tree-homomorphic-examples}

Homomorphic tree constructions are based on Authenticated Multipoint Evaluation Trees (AMTs)~\cite{ACZ+20} and Hyperproofs~\cite{hyperproofs}, both of which use pairings.
We next describe a homomorphic tree construction based on AMTs.
In an AMT, each inner node is a KZG commitment to a quotient polynomial.
Let $\phi(X)$ denote the degree $N-1$ polynomial such that $\phi(i) = m_i$.
We define the range operation $\range(b_0, \ldots, b_j)$ as the set of integers $i$ such that for each $i$, there exists a (potentially empty) binary sequence $b^i_{1}, \ldots, b^i_{h-j}$ that satisfies $\bin(i) = b_1, \ldots, b_j, b^i_{1}, \ldots, b^i_{h-j}$.
Given the $\range(.)$ operation, for each $j \in [h]$, we recursively define the quotient and remainder polynomials $q_{b_0, \ldots, b_j}$ and $r_{b_0, \ldots, b_j}$ as follows:
\begin{IEEEeqnarray*}{llCl}
\text{for the root: } & \phi(X) &=& q_{b_0}(X) \prod_{i \in \range(b_0)} (X-i) + r_{b_0}(X) \\*
\text{for internal nodes: } & r_{b_0, \ldots, b_{j-1}}(X) &=& q_{b_0, \ldots, b_j}(X) \prod_{i \in \range(b_0, \ldots, b_j)} (X-i) + r_{b_0, \ldots, b_j}(X)
\end{IEEEeqnarray*}
For the leaves at indices $i \in [N]$, it holds that $r_{b_0, \bin(i)} = m_i$.

The inner nodes of the AMT are KZG commitments to the quotient polynomials at the corresponding positions:
\begin{IEEEeqnarray*}{C}
u_{b_0, \ldots, b_j} = [q_{b_0, \ldots, b_j}(X)].
\end{IEEEeqnarray*}
The vector commitment to the messages $(m_i)_{i \in N}$ is the KZG commitment $C = [\phi(X)]$ to the polynomial $\phi(X)$ (thus, the VC can be updated at Line~\ref{line:vc-update} of Alg.~\ref{alg.vc.sublinear.p} in the same way as in Section~\ref{sec:kzg-update}).
The opening proof $\pi_i$ for the message $m_i$ at index $i$, $\bin(i) = b_1, \ldots, b_h$, consists of the inner nodes on the path from the message to the root: $\pi_i = (N, i, (u_{b_0, \ldots, b_j})_{j = 0, \ldots, h})$.
To verify an opening proof $(N, i, (u_{b_0, \ldots, b_j})_{j = 0, \ldots, h})$ for a message $m_i$ with respect to the commitment $C$, the verifier checks if the following pairing relation holds:
\begin{IEEEeqnarray*}{C}
e(C, g) = e(g^{m_i}, g) \cdot \prod_{j = 0}^h e\left(u_{b_0, \ldots, b_j}, \left[\prod_{i \in \range(b_0, \ldots, b_j)} (X-i)\right]\right)
\end{IEEEeqnarray*}
(Although the verification above requires the public parameters to include commitments to $O(N)$ polynomials of the form $[\prod_{i \in \range(b_0, \ldots, b_j)} (X-i)]$, \cite{ACZ+20} reduces the size of the public parameters needed to calculate these commitments to $O(\log{(N)})$ by restricting the evaluation points to roots of unity.)

Let $(u^i_{b_0, \ldots, b_j})_{(b_1, \ldots, b_j) \in 2^{\{0,1\}^n}}$ and $(q^i_{b_0, \ldots, b_j}(X))_{(b_1, \ldots, b_j) \in 2^{\{0,1\}^n}}$ denote respectively the inner nodes of the AMT of the Lagrange basis polynomial $L_i(X)$ and the corresponding quotient polynomials at those inner nodes.
By the homomorphism of AMTs, each inner node of an AMT for a polynomial $\phi(X)$ can be expressed as a weighted product of the corresponding inner nodes of the AMTs of the Lagrange basis polynomials $L_i(X)$, $i \in [N]$.
Moreover, we observe that each remainder polynomial $r_{b_0, \ldots, b_j}$ depends only on the messages $m_i$, $i \in \range(b_0, \ldots, b_j)$, \ie, the messages that are under its node.
As $q_i(b_0, \ldots, b_j)(X)$ is the quotient polynomial obtained by dividing $r_{b_0, \ldots, b_{j-1}}(X)$ by $\prod_{i \in \range(b_0, \ldots, b_j)}(X - i)$, it holds that
\begin{IEEEeqnarray*}{llCl}
\text{for the root: } & q_{b_0}(X) &=& \sum_{i = 1}^N m_i q^i_{b_0}(X) \\*
\text{for the root: } & u_{b_0} &=& \prod_{i = 1}^N (u^i_{b_0})^{m_i} \\*
\text{for internal nodes: } & q_{b_0, \ldots, b_j}(X) &=& \sum_{i \in \range(b_0, \ldots, b_{j-1})} m_i q^i_{b_0, \ldots b_j}(X) \\*
\text{for internal nodes: } & u_{b_0, \ldots, b_j} &=& \prod_{i \in \range(b_0, \ldots, b_{j-1})} (u^i_{b_0, \ldots, b_j})^{m_i}
\end{IEEEeqnarray*}
Therefore, AMTs are homomorphic trees, and their partial digest function is characterized by $p=1$:
\[h_{i,j,c}(m_i) = (u^i_{b_0,\bin(i)[0{:}j-2],c})^{m_i}\]

When a message at index $i$, $\bin(i) = (b_1, \ldots, b_h)$, is updated, the new values (denoted by $u'$) of the inner nodes $u_{\bot}$, $u_{0,1}$, $u_{1,0}$ and $u_{b'_0, \ldots, b'_j}$, $j \geq 2$, such that $\bin(i)[0{:}j-2]=(b'_1,\ldots,b'_{j-1})$ (\ie, $b'_\ell=b_\ell$ for $\ell \in [j-1]$),  can be found using the inner nodes of the AMTs of Lagrange basis polynomials as shown below:
\begin{IEEEeqnarray*}{C}
u'_{b'_0, \ldots, b'_j} = u_{b'_0, \ldots, b'_j} \cdot (u^i_{b'_0,\ldots,b'_j})^{m'_i-m_i} = u_{b'_0, \ldots, b'_j} \cdot h_{i, j, (b'_j)}(m'_i-m_i)
\end{IEEEeqnarray*}
Therefore, updating inner nodes requires access to the AMTs of the Lagrange basis polynomials $L_i(X)$, $i \in [N]$, which must thus be published as public parameters for updateability.

\subsection{A Concrete Evaluation}
\label{sec:eval-homomorphic-tree}
We next estimate the size of the update information and the proof update time after observing an Ethereum block with ERC20 token transfers.
Suppose the block has the target size of $15$ million gas~\cite{gasandfees}, and each token transfer updates the balance of two distinct accounts stored at separate leaves of the homomorphic tree.
Since each ERC20 token transfer consumes approximately $65,000$ gas, there are $\sim 230$ such transactions in the block, and the block updates $k = 460$ accounts.

We consider a binary AMT with degree $2$ that commits to $N = 256^3 = 2^{24}$ accounts.
For comparison, $256^3 \approx 16.7$ million, matching in magnitude the total number of cumulative unique Ethereum addresses, which is $200$ million as of $2023$~\cite{eth-addresses}.
Thus, the AMT has height $\log_2{(N)} = h = 24$, and each opening proof consists of $25$ KZG commitments, $u_{b_0}, \ldots, u_{b_0, \ldots, b_h}$, which are elements of the pairing-friendly elliptic curve $\mathrm{BLS}12\_381$ and have size $|G| = 48$ bytes~\cite{verkle}.
This implies a proof size of $(\log_2{(N)}+1)|G| = 1.20 $ kBytes.

When $\nu = 1/2$ and $460$ accounts are updated, in the worst case, the update information consists of $2 \lceil k^\nu \rceil \log{(N)} = 1030$ inner nodes.
This implies an update information size of $|U| = 2 \lceil k^\nu \rceil \log{(N)} (\log{(N)}+|G|) = 52.53$ kBytes.
This is comparable to the size of the Ethereum blocks, which are typically below $125$ kBytes~\cite{block-size}.

As for the update time, in the worst case, each user has to add the partial digests of up to $\lceil k^{1-\nu} \rceil = 22$ updated messages to each inner node appearing within its opening proof, one from each height of the homomorphic tree.
Following the benchmarks published in \cite{blst-benchmark} for the specified curve, these operations can take up to $k^{1-\nu} \log_2{(N)}\ 0.000665471\text{ s} = 0.34$ seconds on commodity hardware, given a runtime of $665471$ nanoseconds per exponentiation of a group element with a message value.
This is again comparable to the $12$ second inter-arrival time of Ethereum blocks.

Table~\ref{tab:comparison-homomorphic-tree} compares the number of published inner nodes $2 \lceil k^\nu \rceil \log{(N)}$, the total update information size $2 \lceil k^\nu \rceil \log_2{(N)} ) (\log_2{(N)} + |G|)$, the number of group exponentiations per proof update $k^{1-\nu} \log_2{(N)}$ and the proof update time $k^{1-\nu} \log_2{(N)}\ 0.000665471\text{ s}$ at $\nu = 0, 0.25, 0.5, 0.75, 1$.
The degree of the homomorphic Merkle tree and the opening proof size are fixed at $2$ and $25$ inner nodes ($|\pi| = 1.20$ kBytes) respectively.
\begin{table}[]
\centering
\begin{tabular}{|c|c|c|c|c|}
 \hline
  $\nu$ & \# inner nodes & $|U|$ (kBytes) & $\#$ group exp. & Time (s) \\  [0.5ex]
 \hhline{|=|=|=|=|=|}
  $0$ & $1$ & $0.048$ & $11040$ & $7.35$ \\
  \hline
  $0.25$ & $223$ & $11.37$ & $2384$ & $1.59$ \\
  \hline
  $0.50$ & $1030$ & $52.53$ & $515$ & $0.34$ \\
  \hline
  $0.75$ & $4768$ & $243.17$ & $112$ & $0.075$ \\
  \hline
  $1$ & $22080$ & $1126.08$ & $0$ & $\sim0$ \\
 \hline
\end{tabular}
\vspace{0.5cm}
\caption{For different trade-off points between the update information size and proof update complexity, parameterized by $\nu$, the table shows the (worst case) number of published inner nodes $2 \lceil k^\nu \rceil \log{(N)}$, the total update information size $2 \lceil k^\nu \rceil \log{(N)} (\log{(N)} + |G|)$, the number of group exponentiations per proof update $k^{1-\nu} \log{(N)}$ and the estimated time for a single proof update $k^{1-\nu} \log{(N)}\ 0.000665471\text{ s}$.
There are $N = 2^{24}$ accounts in total, $k=460$ updates at the accounts, the inner nodes have size $|G|=48$ Bytes, and each group exponentiation takes $T_G = 0.000665471\text{ s}$.}
\label{tab:comparison-homomorphic-tree}
\vspace{-0.5cm}
\end{table}

We also calculate the size of the public parameters needed for the update operation.
For this purpose, we find the total size of the AMTs of the $N$ Langrange basis polynomials.
Each AMT of a Lagrange basis polynomial contains at most $2\log{(N)}+1$ inner nodes that are not equal to the identity group element.
Hence, the total size of the public parameters becomes $(2N \log{(N)} + N)|G|$, which is equal to $36.46$ GBytes.

We compare the numbers from the concrete evaluation of homomorphic trees to those for Verkle trees in Section~\ref{sec:evaluation}.

\section{Post-Quantum Secure Vector Commitments with Sublinear Update}
\label{sec:sublinear-complexity}

In this section, we describe a different family of VCs with sublinear update, parameterized by the values $\nu \in (0,1)$ and characterized by the functions $(g_1,g_2) = (k^{\nu}, k^{1-\nu})$.
They are based on homomorphic Merkle trees.
These VCs do not require any trusted setup and are secure against quantum computers.
However, they are less performant compared to the homomorphic tree constructions of Section~\ref{sec:sublinear-complexity-efficient}.

\subsection{Homomorphic Merkle Trees}
\label{sec:merkle-tree-homomorphic}

We first introduce homomorphic Merkle trees where messages placed in the leaves take values in a set $\mathcal{M}$. 
We will use two collision-resistant hash functions $\tilde{f} \colon \mathcal{D}\times\mathcal{D} \to \mathcal{R}$ and $f \colon \mathcal{M} \to \mathcal{R}$,
where both $\mathcal{M}$ and $\mathcal{D}$ are vector spaces over some field $\mathbb{F}$, 
and $\mathcal{R}$ is an arbitrary finite set.
We will also need an injective  mapping $g: \mathcal{R} \to \mathcal{D}$, which need not be efficiently computable.
We use $g^{-1}: \mathcal{D} \to \mathcal{R}$ to denote the inverse of~$g$,
meaning that $g^{-1}(g(x)) = x$ for all $x \in \mathcal{R}$.
We require that $g^{-1}$ be efficiently computable.

Now, for $j \in [h]$, where $h$ is the height of the tree, every node $u_{b_0,\ldots,b_j} \in \mathcal{D}$ of the homomorphic Merkle tree is characterized by the following expressions:
\begin{IEEEeqnarray*}{llCl}
\text{a leaf node:} & g^{-1}(u_{b_0,\bin(i)}) &=& f(m_i) \\*    %
\text{an internal node:} \quad & g^{-1}(u_{b_0,\ldots,b_j}) &=& \tilde{f}(u_{b_0,\ldots,b_j,0},\  u_{b_0,\ldots,b_j,1}) \text{ for } j < h
\end{IEEEeqnarray*}

The homomorphic property of the Merkle tree refers to the fact that 
there are efficiently computable functions
\[
   h_{i,j}: \mathcal{D} \to \mathcal{D}  \qquad \text{for $i \in [N]$ and $j \in [h]$,}
\]
such that every inner node $u_{b_0,\ldots,b_j} \in \mathcal{D}$ can be expressed as
\begin{IEEEeqnarray*}{rCl}
u_{b_0} &=& \hspace{2.5em} \sum_{i \in [N]} \hspace{0.5em} h_{i,0}(m_i) \\
u_{b_0,\ldots,b_j} &=& \hspace{-1em} \sum_{i\colon\bin(i)[0{:}j-1]=(b_1,\ldots,b_j)} \hspace{-3em} h_{i,j}(m_i).
\end{IEEEeqnarray*}
We refer to the function $h_{i,j}$ as a \emph{partial digest function} 
and refer to $h_{i,j}(m_i)$ as the \emph{partial digest} of $m_i$.
In a homomorphic Merkle tree, every internal node is the sum of the partial digests of the leaves under that node.
We will show in Section~\ref{sec:linear-hash-functions} that each function $h_{i,j}$ can be expressed
as an iterated composition of the functions $f$ and $\tilde{f}$.
Evaluating $h_{i,j}$ requires evaluating the functions $f$ and $\tilde{f}$ exactly $h-j$ times.
In the notation of Section~\ref{sec:tree-homomorphic}, homomorphic Merkle trees correspond to a homomorphic tree with $p = 0$.
Since when $p = 0$, $h_{i,j,c}(m_i) = h_{i, j, \emptyset}(m_i)$ for all $i,j \in [N] \times [h]$ in the notation of Section~\ref{sec:merkle-tree-homomorphic}, we omit the index $c$ and denote the partial digest functions above by $h_{i,j}(m_i)$.

Opening proof for a message consists of \emph{both} children of the internal nodes on the path from the message to the root (as opposed to Merkle opening proofs that contain only the siblings of the internal nodes on the path).
For instance, the opening proof for the message $m_i$ at leaf index $i$, with $\bin(i) = (b_1,\ldots,b_h)$, is $(N, i, (u_{b_0,\ldots,b_j,0},u_{b_0,\ldots,b_j,1})_{j=0,\ldots,h-1})$.
Opening proofs are verified using the functions $f$ and ${\tilde f}$ (not by using the functions $h_{i,j}$).
To verify an opening proof $(N, i, (u_{b_0,\ldots,b_j,0},u_{b_0,\ldots,b_j,1})_{j=0,\ldots,h-1})$ for a message $m_i$ with respect to the root $u_{b_0}$, the verifier checks if the following equalities hold:
\begin{IEEEeqnarray*}{llCl}
\text{for the leaf:} & g^{-1}(u_{b_0,\bin(i)}) &=& f(m_i) \\*    %
\text{for the internal nodes:} \ & g^{-1}(u_{b_0,\ldots,b_j}) &=& \tilde{f}(u_{b_0,\ldots,b_j,0},\  u_{b_0,\ldots,b_j,1}) \text{ for } j = h-1, \ldots, 0.
\end{IEEEeqnarray*}
If so, it accepts the proof, and otherwise it outputs reject.

As in Section~\ref{sec:tree-homomorphic}, consider a homomorphic Merkle tree that commits to four messsages $m_0,m_1,m_2,m_3$.
Since $p = 0$ for homomorphic Merkle trees, its root $u_{\bot}$ and inner nodes $u_{\bot,0}$, $u_{\bot,1}$, $u_{\bot,0,0}$, $u_{\bot,0,1}$, $u_{\bot,1,0}$, $u_{\bot,1,1}$ can be calculated as follows:
\begin{IEEEeqnarray*}{rClrCl}
u_{\bot} &=& h_{0,0}(m_0) + h_{1,0}(m_1) + h_{2,0}(m_2) + h_{3,0}(m_3)\ ; \qquad   &    u_{\bot,0,0} &=& h_{0,2}(m_0) \\*
u_{\bot,0} &=& h_{0,1}(m_0) + h_{1,1}(m_1)\ ;                                      &    u_{\bot,0,1} &=& h_{1,2}(m_1) \\*
u_{\bot,1} &=& h_{2,1}(m_2) + h_{3,1}(m_3)\ ;                                      &    u_{\bot,1,0} &=& h_{2,2}(m_2) \\*
&&                                                                              &    u_{\bot,1,1} &=& h_{3,2}(m_3)
\end{IEEEeqnarray*}
The opening proof for $m_3$ is given by $(4, 3, ((u_{\bot,0}, u_{\bot,1}), (u_{\bot,1,0}, u_{\bot,1,1})))$, and verified by checking the following equations:
\begin{IEEEeqnarray*}{llCl}
\text{for $u_{\bot,1,1}$:} \quad   & g^{-1}(u_{\bot,1,1}) &=& f(m_i) \\*
\text{for $u_{\bot,1}$:}           & g^{-1}(u_{\bot,1}) &=& \tilde{f}(u_{\bot,1,0},\  u_{\bot,1,1}) \\*
\text{for $u_{\bot}$:}             & g^{-1}(u_{\bot}) &=& \tilde{f}(u_{\bot,0},\  u_{\bot,1})
\end{IEEEeqnarray*}

It now follows that 
when a message $m_i$ is updated to $m'_i$, each inner node on the path from the leaf to the root can be updated from $u_{b_0,\ldots,b_j}$ to $u'_{b_0,\ldots,b_j}$ using the functions $h_{i,j}$ as follows:
\[
    u'_{b_0,\ldots,b_j} \ \ =\ \  
         h_{i,j}(m'_i) + \hspace{-2em}  \sum_{\substack{x \neq i\colon \\ \bin(x)[0{:}j-1]= (b_1,\ldots,b_j)}} \hspace{-3em} h_{x,j}(m_x)  \ \ = \ \ 
         u_{b_0,\ldots,b_j} + h_{i,j}(m'_i) - h_{i,j}(m_i)
\]
When the partial digest functions are linear in their input, the expression $h_{i,j}(m'_i) - h_{i,j}(m_i)$ can be written as $h_{i,j}(m'_i) - h_{i,j}(m_i) = \text{sign}(m'_i-m_i)h_{i,j}(|m'_i-m_i|)$.
This lets us calculate the updated internal node using only the knowledge of the message diff $m_i'-m_i$.
We provide examples of homomorphic Merkle tree constructions in Section~\ref{sec:linear-hash-functions} with linear partial digest functions $h_{i,j}$.
Homomorphic Merkle proofs in these constructions consist of the two siblings of the inner nodes on the path from the proven message to the root and the vector commitment itself is given by $g^{-1}(b_\bot)$ (Section~\ref{sec:linear-hash-functions}).

\subsection{Structuring the Update Information}
\label{sec:structuring-update-information}

\begin{algorithm}[ht!]
    \captionsetup{font=small} 
    \caption{Algorithms for a homomorphic Merkle tree. Each user knows the total number of leaves $N$. The recursive algorithm $\textsc{UpdateNode}$, parameterized by $\nu \in [0,1]$, takes an index as input, and checks if the new value of the node at that index is to be published as part of the update information $U$. If so, it appends the new value to $U$, and recursively calls itself on the children of the node. Not all of $U$ and $(i, m'_{i}-m_{i})_{i \in [N]}$ are passed to the proof update algorithm and its relevant parts are read selectively to keep the runtime at a minimum.
    }
    \label{alg.vc.sublinear}
    \begin{algorithmic}[1]\footnotesize
    \Alg{\sc Update}{$C, (i, m'_{i}-m_{i})_{i \in [N]}$, \T}
        \Let{U}{\textsc{Empty}()}
        \Alg{\sc UpdateNode}{$\mathsf{idx}$}
            \Let{b_0,\ldots,b_d}{\mathsf{idx}}
            \Let{\mathcal{S}}{\{j \in [k] \colon\bin(i_j)[0:d-1] = (b_1, \ldots, b_d)\}}
            \If{$|\mathcal{S}| > k^{1-\nu}$}
                \Let{\T[b_0, \ldots, b_d]}{\T[b_0, \ldots, b_d] + \sum_{j \in \mathcal{S}} \text{sign}(m'_{i_j}-m_{i_j})h_{i_j,d}(|m'_{i_j} -m_{i_j}|)}
                \Let{U[b_0, b_1, \ldots, b_d]}{\T[b_0, b_1, \ldots, b_d]}
                \State $\textsc{UpdateNode}((b_0,\ldots,b_d,0))$
                \State $\textsc{UpdateNode}((b_0,\ldots,b_d,1))$
            \EndIf
        \EndAlg
        \State $\textsc{UpdateNode}((b_0))$
        \Let{C'}{\T[b_0]}
        \State\Return $(C', U, \T)$
    \EndAlg
    \Alg{\sc ProofUpdate}{$C, \pi_x , m'_x, x, U$}
        \Let{\pi'_x}{\{\}}
        \Let{(b_1,\ldots,b_h)}{\bin(x)}
        \For{$d= h, \ldots, 1$}
            \If{$(b_0,b_1 \ldots\overline{b}_d) \in U$}
                \Let{\pi'_x[b_0,b_1 \ldots\overline{b}_d]}{U[b_0,b_1 \ldots\overline{b}_d]}
            \Else
                \Let{\mathcal{S}}{\{j \in [k] \colon\bin(i_j)[0:d-1] = (b_1, \ldots, \overline{b}_d)\}}
                \Let{\pi'_x[b_0,\ldots,\overline{b}_d]}{\pi_x[b_0,\ldots,\overline{b}_d] + \sum_{j \in \mathcal{S}} \text{sign}(m'_{i_j}-m_{i_j})h_{i_j,d}(|m'_{i_j} -m_{i_j}|)}
            \EndIf
        \EndFor
        \State\Return $\pi'_x$
    \EndAlg
    \end{algorithmic}
\end{algorithm}

Update and proof update algorithms that enable homomorphic Merkle trees to achieve sublinear complexity are described by Alg.~\ref{alg.vc.sublinear}.
Update information is generated using the same algorithm as in Section~\ref{sec:structuring-update-information-ht} for $p = 0$.

When the messages $(i_j, m_{i_j})_{j \in [k]}$ are updated to $(i_j, m'_{i_j})_{j \in [k]}$, a user first retrieves the inner nodes within its Merkle proof that are published as part of the update information.
It then calculates the non-published inner nodes within the proof using the partial digests.
For instance, consider a user with the proof $(u_{b_0,\overline{b}_1}, u_{b_0,b_1,\overline{b}_2}, \ldots, u_{b_0,b_1,b_2,\ldots,\overline{b}_h})$ for some message $m_x$, $(b_1, \ldots, b_h) = \bin(x)$.
To update the proof, the user first checks the update information $U$ and replaces the inner nodes whose new values are provided by $U$: 
$u'_{b_0,b_1,\ldots,\overline{b}_d} \xleftarrow[]{} U[b_0,b_1 \ldots\overline{b}_d]$, $d \in [h]$, if $U[b_0,b_1 \ldots\overline{b}_d] \neq \bot$. 
Otherwise, the user finds the new values at the nodes $u_{b_0,b_1,\ldots,\overline{b}_d}$, $d \in [h]$, using the functions $h_{x,d}$:
\begin{IEEEeqnarray*}{rCl}
u'_{b_0, \ldots, b_{d-1},\overline{b}_{d}} &=& u_{b_0, \ldots, b_{d-1},\overline{b}_{d}} \\
&&\ + \sum_{j \in [k]} 1_{\bin(i_j)[:d] = (b_1, \ldots,\overline{b}_{d})} \left(\text{sign}(m'_{i_j}-m_{i_j})h_{i_j,d}(|m'_{i_j} -m_{i_j}|)\right) 
\end{IEEEeqnarray*}

Theorem~\ref{thm:complexity-homomorphic-tree} also holds for homomorphic Merkle trees:

\begin{theorem}
\label{thm:complexity-hmt}
Complexity of the update information size and the runtime of proof updates are as follows: $g_1(k) = k^\nu$ and $g_2(k) = k^{1-\nu}$.
\end{theorem}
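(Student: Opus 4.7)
The plan is to observe that a homomorphic Merkle tree is precisely the $p=0$ specialization of the homomorphic tree framework of Section~\ref{sec:tree-homomorphic}, and then transport the proof of Theorem~\ref{thm:complexity-homomorphic-tree} to this setting, replacing the group-operation cost $T_G$ by the hash-function cost $T_f$ used for the partial digests. Since Alg.~\ref{alg.vc.sublinear} is literally Alg.~\ref{alg.vc.sublinear.p} with $p=0$, the structural parts of the argument carry over verbatim; the only task is to track the two constants that depend on $p$ and on the cost model of the partial digest.

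For the bound on $|U|$, I would define $\mathcal{P}$ as the set of nodes $u$ such that $u$ is included in $U$ but no child of $u$ is included in $U$. By the termination rule of \textsc{UpdateNode}, every $u \in \mathcal{P}$ has strictly more than $k^{1-\nu}$ updated messages among its descendant leaves; since the subtrees rooted at distinct nodes of $\mathcal{P}$ are disjoint and there are only $k$ updated leaves in total, $|\mathcal{P}| < k/k^{1-\nu} = k^\nu$. Every other node placed in $U$ must lie on the path from the root to some node of $\mathcal{P}$, so the total number of published nodes is at most $k^\nu \log(N)$. Each published entry contributes an index of size $O(\log N)$ and a hash value of size $|H|$, giving
\[
|U| \;=\; \Theta\bigl(k^\nu \log(N)(\log(N) + |H|)\bigr) \;=\; \Tilde{\Theta}(k^\nu)\,|H| \;=\; \Tilde{\Theta}(k^\nu)\,\lambda,
\]
hence $g_1(k) = k^\nu$. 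Note that because $p = 0$ we do not pick up the extra $2^{p}$ factor present in the general statement; otherwise the counting is identical.

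For the bound on the runtime, I would walk the at most $\log(N)$ siblings that constitute the opening proof. For each such node, either its updated value is already published in $U$, in which case a binary search over the sorted list $U$ recovers it in $O(\log(N) + |H|)$ time, or it is not published, and by the termination rule of \textsc{UpdateNode} its subtree contains at most $k^{1-\nu}$ updated messages. In the latter case the new value of the node is obtained from the old one by adding at most $k^{1-\nu}$ partial digests of message diffs, and by Section~\ref{sec:merkle-tree-homomorphic} each $h_{i,d}$ is an iterated composition of $f$ and $\tilde f$ of depth $h - d = O(\log N)$, so each partial digest costs $O(\log(N)\,T_f)$. Summing over the $O(\log N)$ proof positions gives
\[
O\bigl(\log(N)\cdot(\log(N) + |H| + k^{1-\nu}\log(N)\,T_f)\bigr) \;=\; \Tilde{\Theta}(k^{1-\nu})\,T_f,
\]
so $g_2(k) = k^{1-\nu}$.

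The only non-routine point is the bookkeeping for the partial-digest cost: unlike the pairing-based homomorphic trees, where each $h_{i,j,c}$ is essentially one group exponentiation, in the lattice-style construction $h_{i,j}$ is realized by composing $f$ and $\tilde f$ a logarithmic number of times, so one must be careful to state the complexity in units of $T_f$ and absorb the extra $\log N$ factor into the $\Tilde{\Theta}(\cdot)$ notation. Once this is accounted for, the theorem follows directly from the same counting argument used for Theorem~\ref{thm:complexity-homomorphic-tree}.
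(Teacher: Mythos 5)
Your proposal is correct and follows essentially the same argument as the paper's own proof: the set you call $\mathcal{P}$ is the paper's $\mathcal{U}$ (maximal published nodes with no published children), the disjoint-subtree counting gives $|\mathcal{P}| < k^\nu$ and at most $k^\nu \log(N)$ published nodes, and the runtime bound with the $O(\log(N)\,T_f)$ cost per partial digest matches the paper's $k^{1-\nu}\log(N)$ function evaluations per unpublished node. The extra care you take in noting that the $2^p$ factor disappears for $p=0$ and that the partial-digest cost is measured in iterated evaluations of $f$ and $\tilde f$ is consistent with, and slightly more explicit than, the paper's treatment.
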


\begin{proof}
Let $\mathcal{U}$ denote the subset of the inner nodes published by the algorithm as part of $U$ such that no child of a node $u \in \mathcal{U}$ is published.
Then, there must be over $k^{1-\nu}$ updated messages within the subtree rooted at each node $u \in \mathcal{U}$.
Since there are $k$ updated messages, and by definition of $\mathcal{U}$, the subtrees rooted at the nodes in $\mathcal{U}$ do not intersect at any node, there must be less than $k/k^{1-\nu} = k^\nu$ inner nodes in $\mathcal{U}$.
Since the total number of published inner nodes is given by $\mathcal{U}$ and the nodes on the path from the root to each node $u \in \mathcal{U}$, this number is bounded by $k^\nu \log{(N)} = \Tilde{\Theta}(k^\nu)$.
Hence,
$|U| = \Theta(k^\nu \log{(N)}(\log{(N)}+|H|)) = \Tilde{\Theta}(k^\nu)|H| = \Tilde{\Theta}(k^\nu) \lambda$,
which implies $g_1(k) = k^\nu$.

For each inner node in its Merkle proof, the user can check if a new value for the node was provided as part of $U$, and replace the node if that is the case, in at most $\Theta(\log{(N)}+|H|)$ time by running a binary search algorithm over $U$.
On the other hand, if the new value of a node in the proof is not given by $U$, the user can calculate the new value after at most $k^{1-\nu}\log{(N)}$ function evaluations.
This is because there can be at most $k^{1-\nu}$ updated messages within the subtree rooted at an inner node, whose new value was not published as part of $U$.
This makes the total time complexity of a proof update at most 
\begin{IEEEeqnarray*}{C}
\Theta(\log{(N)}(\log{(N)}+|H|+k^{1-\nu}\log{(N)}T_f)) = \Tilde{\Theta}(k^{1-\nu}) T_f,
\end{IEEEeqnarray*}
which implies $g_2(k) = k^{1-\nu}$.
\end{proof}

\subsection{Constructions for Homomorphic Merkle Trees}
\label{sec:linear-hash-functions}

Homomorphic Merkle trees were proposed by \cite{PT11,PSTY13,QZC+14}.
They use lattice-based hash functions, and their collision-resistance is proven by reduction to the hardness of the gap version of the shortest vector problem ($\mathsf{GAPSVP}_\gamma$), which itself follows from the hardness of the small integer solution problem.
We next describe the construction introduced by \cite{PT11}, which is similar to those proposed by later works \cite{PSTY13,QZC+14} (an alternative construction is provided in Appendix~\ref{sec:appendix-alternative}).
Its correctness and security follow from \cite[Theorem 4]{PT11}.

Let $L(\mathbf{M})$ denote the lattice defined by the basis vectors $\mathbf{M} \subset \mathbb{Z}^{k \times m}_q$ for appropriately selected parameters $k,m,q$, where $m = 2 k \log q$.
Consider vectors $u \in \{0, \ldots, t\}^{k \log q}$, where $t$ is a small integer.
The (homomorphic) hash functions $f \colon \mathbb{Z}^{k \log q} \to L(\mathbf{M})$ and $\tilde{f} \colon \mathbb{Z}^{k \log q} \times \mathbb{Z}^{k \log q} \to L(\mathbf{M})$ used by \cite{PT11} are defined as $f(x) = \mathbf{M}x$ and $\tilde{f}(x,y) = \mathbf{M} \mathbf{U} x + \mathbf{M} \mathbf{D} y$ respectively.
Here, $\mathbf{U}$ and $\mathbf{D}$ are special matrices that double the dimension of the multiplied vector and shift it up or down respectively.
The remaining entries are set to zero.
For convenience, we define $\mathbf{L} = \mathbf{M} \mathbf{U}$ and $\mathbf{R} = \mathbf{M} \mathbf{D}$.

Since the domain and range of the hash functions are different, to ensure the Merkle tree's homomorphism, authors define a special mapping $g \colon \mathbb{Z}^k_q \to \mathbb{Z}^{k \log{q}}_q$ from the range of the hash functions to their domain.
Here, $g(.)$ takes a vector $\mathbf{v} \in \mathbb{Z}_q$ as input and outputs \emph{a} radix-2 representation for $\mathbf{v}$.
However, as there can be many radix-2 representations of a vector, to help choose a representation that yields itself to homomorphism, authors prove the following result: for any $\textbf{x}_1, \textbf{x}_2, \ldots, \textbf{x}_t \in \mathbb{Z}_q$, there exists \emph{a short} radix-2 representation $g(.)$ such that $g(\textbf{x}_1 + \textbf{x}_2 + \ldots + \textbf{x}_t \mod q) = b(\textbf{x}_1) + b(\textbf{x}_2) + \ldots + b(\textbf{x}_t) \mod q \in \{0, \ldots, t\}^{k \log q}$, where the function $b \colon \mathbb{Z}^k_q \to \{0,1\}^{k\log{q}}$ returns the binary representation of the input vector.
This equality enables the mapping $g(.)$ to \emph{preserve} the hash functions' original homomorphic property.
Then, given an inner node $u_{b_0,\ldots,b_j}$ as input, the homomorphic Merkle tree uses the short radix-2 representation $g(.)$ that enforces the following equality: $u_{b_0,\ldots,b_j} = g(\mathbf{L} u_{b_0,\ldots,b_j,0} + \mathbf{R} u_{b_0,\ldots,b_j,1} \mod q) = b(\mathbf{L} u_{b_0,\ldots,b_j,0}) + b(\mathbf{R} u_{b_0,\ldots,b_j,1}) \mod q$.
Finally, this enables calculating the value of each inner node as a sum of the partial digests $h_{i,j}(.)$ of the messages $m_i$ under the node $u_{b_0,\ldots,b_j}$ (\ie, $(m_i)_{\bin(i)[0{:}j] = (b_0,\ldots,b_j)}$) as outlined in Section~\ref{sec:merkle-tree-homomorphic}, i.e., 
\begin{IEEEeqnarray*}{C}
u_{b_0,\ldots,b_j} = \sum_{i\colon\bin(i)[0{:}j-1]=(b_1,\ldots,b_j)} h_{i,j}(m_i),
\end{IEEEeqnarray*}
where $h_{i,j}(.)$ is expressed in terms of the bits $\bin(i)[j{:}h-1] = (b'_1, \ldots, b'_{h-j})$:
\begin{IEEEeqnarray*}{C}
h_{i,j}(m_i) = f_{b'_1}(f_{b'_2}(\ldots f_{b'_{h-j}}(b(f(m_i)))))
\end{IEEEeqnarray*}
Here, $f_0(.)$ and $f_1(.)$ are defined as $b(\mathbf{L}.)$ and $b(\mathbf{R}.)$ respectively.
Since $b(.)$, binary expansion, is a linear operation and matrix multiplication is linear, $h_{i,j}(.)$ is linear in its input.

Opening proof of a message $m$ consists of its index and $\alpha_i$ and $\beta_i$, $i \in [h]$, $h = \log{(N)}$, where $\alpha_i$ and $\beta_i$ are the children of the inner nodes on the path from $m$ to the root.
The proof can be verified in $\log{(N)}$ time by iteratively checking if $f(m) = g^{-1}(\alpha_h)$ (or $ = g^{-1}(\beta_h)$) and $\tilde{f}(\alpha_i,\beta_i) = g^{-1}(\alpha_{i-1})$ (or $=g^{-1}(\beta_{i-1})$ depending on the message index), where $g^{-1}$ returns a number given its radix-2 representation \cite{PT11}.

Note that both $f$ and $\tilde{f}$ are homomorphic hash functions~\cite{BGG94}.
Other examples of homomorphic hash functions include Pedersen hashes and KZG commitments. 
However, the homomorphic property of the hash function is not sufficient for constructing a homomorphic Merkle tree when the function is combined with the output of other functions in a serial manner as in Merkle trees.
For the lattice-based function, this was possible because of repeated linearity~\cite{PT11}, which refers to the existence of a linear mapping $g(.)$ from the range to the domain of the hash function.
This mapping enabled the iterative hashing within the Merkle tree to preserve the linearity of the hash function.
Such repeated linearity does not exist for Pedersen hashes and KZG commitments as a linear mapping from the range to the domain would imply the violation of the discrete log assumption.
That is why Verkle trees based on KZG commitments are not homomorphic and cannot support sublinear update.

\subsection{A Concrete Evaluation}
\label{sec:eval-homomorphic}

We now estimate the size of the update information and the number of operations to update an opening proof after observing an Ethereum block consisting of ERC20 token transfers.
Suppose the Ethereum state is persisted using the homomorphic Merkle tree construction built with the more performant hash function in \cite[Section 4]{QZC+14}.
As in Section~\ref{sec:eval-homomorphic-tree}, suppose the block has the target size of $15$ million gas~\cite{gasandfees}, and each token transfer updates the balance of two distinct accounts stored at separate leaves of the Verkle tree.
Then, there are $\sim 230$ such transactions in the block, and the block updates $k = 460$ accounts.
We assume that the homomorphic Merkle tree has degree $2$ and commits to $256^3$ accounts as in Section~\ref{sec:eval-homomorphic-tree}.
Each opening proof consists of $2\log{(N)} = 48$ inner nodes.

When $\nu = 1/2$ and $460$ accounts are updated, in the worst case, the update information consists of $\lceil \sqrt{k} \rceil \log{(N)} = 528$ inner nodes.
Our evaluation takes the size of an inner node as given in \cite{QZC+14} for the more performant hash function, namely $|H| = 0.21$ MB (which is equal to the key size in \cite{QZC+14}).
This implies an update information size of $|U| = 110.88$ MBytes and an opening proof size of $|\pi| = 10.08$ MBytes.

As for the update time, in the worst case, each user has to calculate the partial digests of $44$ updated messages at each height of the homomorphic Merkle tree, \ie, the effect of these updated messages on each inner node of its opening proof.
Calculating the partial digest of a message at height $h$ measured from the leaves requires $h$ evaluations of the hash function.
This implies a proof update complexity of $2 \sum_{i=0}^{\log{N}-1} i \min(\lceil \sqrt{k} \rceil, 2^i) = 11,900$ hash evaluations.
To find numerical upper bounds for the update time, we use the hash function evaluation time $T_f = 2.74$ ms, published by \cite{QZC+14} for the more performant function (this is for commodity hardware; \cf \cite{QZC+14} for the details).
This gives an upper bound of $32.6$ seconds for the update time.

Based on the benchmarks in \cite{QZC+14}, Table~\ref{tab:comparison-homomorphic} compares the number of published inner nodes $\lceil k^\nu \rceil \log{(N)}$, the total update information size $\lceil k^\nu \rceil \log{(N)} |H|$ (assuming that the size of each inner node is $|H|$ upper bounded by $0.21$ MBytes), the number of hash evaluations per proof update $2 \sum_{i=0}^{\log{N}-1} i \min(\lceil k^{1-\nu} \rceil, 2^i)$ and the proof update time $2 \sum_{i=0}^{\log{N}-1} i \min(\lceil k^{1-\nu} \rceil, 2^i) T_f$ (assuming that each hash evaluation takes less than $T_f = 2.74$ ms) at $\nu = 0, 0.25, 0.5, 0.75, 1$.
The degree of the homomorphic Merkle tree and the opening proof size are fixed at $2$ and $48$ inner nodes ($|\pi| = 10.08$ MBytes) respectively.
Homomorphic Merkle trees do not require any public parameters.

We compare the numbers from the concrete evaluation of homomorphic Merkle trees to those for Verkle trees in Section~\ref{sec:evaluation}.
\begin{table}[]
\centering
\begin{tabular}{|c|c|c|c|c|}
 \hline
  $\nu$ & \# inner nodes & $|U|$ (MBytes) & $\#$ hash evaluations & Time (s) \\  [0.5ex]
 \hhline{|=|=|=|=|=|}
  $0$ & $1$ & $0.21$ & $227,972$ & $624.6$ \\
  \hline
  $0.25$ & $120$ & $25.20$ & $52,284$ & $143.3$ \\
  \hline
  $0.50$ & $528$ & $110.88$ & $11,900$ & $32.6$ \\
  \hline
  $0.75$ & $2400$ & $504.00$ & $2750$ & $7.54$ \\
  \hline
  $1$ & $11040$ & $2318.40$ & $0$ & $\sim0$ \\
 \hline
\end{tabular}
\vspace{0.5cm}
\caption{For different trade-off points between the update information size and proof update complexity, parameterized by $\nu$, the table shows the number of published inner nodes $\lceil k^\nu \rceil \log{(N)}$, the total update information size $\lceil k^\nu \rceil \log{(N)} |H|$, the number of hash function evaluations per proof update $2 \sum_{i=0}^{\log{N}-1} i \min(\lceil k^{1-\nu} \rceil, 2^i)$ and the proof update time $2 \sum_{i=0}^{\log{(N)}-1} i \min(\lceil k^{1-\nu} \rceil, 2^i) T_f$.
There are $N = 2^{24}$ accounts in total, $k=460$ updates at the accounts, the inner nodes have size $|H|=0.21$ Mbytes, and a hash function evaluation takes $T_f = 2.74$ ms.}
\label{tab:comparison-homomorphic}
\vspace{-0.5cm}
\end{table}

\section{Updating Verkle Trees and Opening Proofs}
\label{sec:verkle-update}

\begin{algorithm}[ht!]
    \captionsetup{font=small} 
    \caption{Update algorithm for a Verkle tree. It takes message updates as input, updates the Verkle tree, \ie the KZG commitments therein, and returns the new Verkle root and the update information. The data structure $\T$ holds the KZG commitments at the inner nodes of the Verkle tree indexed by their positions.
    }
    \label{alg.verkle.update}
    \begin{algorithmic}[1]\footnotesize
    \Alg{\sc Update}{$C, (i, m_{i})_{i \in [N]}, (i, m'_i)_{i \in [N]}, \T$}
        \Let{U}{\textsc{Empty}()}
        \Let{\mathcal{T}'}{\mathcal{T}}
        \For{$j \in [k]$}
            \Let{(b_1,\ldots,b_h)}{\bin(i_j)}
            \Let{\T'[b_0,\ldots,b_h]}{m'_{i_j}}
            \For{$\ell = h-1, \ldots, 0$} 
                \Let{\T'[b_0,\ldots,b_\ell]}{\T[b_0,\ldots,b_\ell][L_{b_{\ell+1}}]^{H(\T'[b_0,\ldots,b_{\ell+1}])-H(\T[b_0,\ldots,b_{\ell+1}])}}
                \Let{U[b_0,\ldots,b_\ell]}{\T'[b_0,\ldots,b_\ell]}
            \EndFor
        \EndFor
        \Let{C'}{h(\T[b_0])}
        \Let{\mathcal{T}}{\mathcal{T}'}
        \State\Return $(C', U, \T)$
    \EndAlg
    \end{algorithmic}
\end{algorithm}
\begin{algorithm}[ht!]
    \captionsetup{font=small} 
    \caption{Proof update algorithm for a Verkle tree. It takes the old Verkle proof $\pi_x$, the KZG commitments $\mathcal{C}$ at the children of the inner nodes from the root to the message, the relevant opening proofs $\Pi'$ held by the user in memory and the update information. It then updates the Verkle proof $\pi_x$ to $\pi'_x$. 
    Here, $\mathcal{C}$ and $\Pi$ are organized as dictionaries indexed by the positions of the maintained KZG commitments and KZG opening proofs.}
    \label{alg.verkle.proofupdate}
    \begin{algorithmic}[1]\footnotesize
    \Alg{\sc ProofUpdate}{$\pi_x , m'_x, x, \Pi, \mathcal{C}, U$}
        \Let{(b_1,\ldots,b_h)}{\bin(x)}
        \Let{-,(G',\pi')}{\pi_x}
        \Let{\Pi'}{\Pi}
        \Let{\mathcal{C}'}{\mathcal{C}}
        \For{$d = h-1, \ldots, 0$}
            \For{$i = 0, \ldots, c-1$}
                \If{$(b_0, \ldots, b_d, i) \in U$}
                    \Let{\mathcal{C}'[b_0,\ldots,b_d,i]}{U[b_0,\ldots,b_d,i]}
                \EndIf
            \EndFor
        \EndFor
        \For{$d = h-1, \ldots, 0$}
            \Let{\Pi'[b_0,\ldots,b_d]}{\Pi'[b_0,\ldots,b_d] \prod_{i \in [c]} \left[\frac{L_{i}(X)-L_i(b_{d+1})}{X-b_{d+1}}\right]^{H(\mathcal{C}'[b_0,\ldots,b_d,i])-H(\mathcal{C}'[b_0,\ldots,b_d,i])}}
        \EndFor
        \Let{r'}{H(\mathcal{C}'[b_0,b_1],\ldots,\mathcal{C}'[b_0,\ldots,b_{h-1}], H(\mathcal{C}'[b_0,b_1]), \ldots, H(\mathcal{C}'[b_0,\ldots,b_h]),b_1,\ldots,b_h)}
        \Let{G'}{\prod_{j \in [h]} \Pi'[b_0,\ldots,b_j]^{r'^j}}
        \Let{t'}{H(r',G')}
        \Let{\pi'}{\prod_{j \in [h]} \Pi'[b_0,\ldots,b_j]^{\frac{r'^j}{t'-b_{j+1}}}}
        \State\Return $((\mathcal{C}'[b_0],\ldots,\mathcal{C}'[b_0,\ldots,b_{h-1}]), (G', \pi'))$
    \EndAlg
    \end{algorithmic}
\end{algorithm}

We now describe the update and proof update functions for Verkle trees (Algs.~\ref{alg.verkle.update} and~\ref{alg.verkle.proofupdate} respectively).
Since Verkle trees were proposed to support stateless clients,
we describe an update scheme that minimizes the runtime complexity of proof updates and does not require the users to download the updated messages or have access to old inner nodes.
As Verkle trees do not support sublinear update, we numerically estimate the size of the update information and the complexity of proof updates in Section~\ref{sec:evaluation}.

\subsection{Update Information}
Suppose the vector $(i, m_i)_{i \in [N]}$ is modified at some index $x$, $(b_1, \ldots, b_h) = \bin(x)$ to be $m'_x$.
Since each inner node is the hash of a KZG commitment, the new inner nodes $u'_{b_0,\ldots,b_j} = H(C'_{b_0,\ldots,b_j})$, $j \in [h]$, can be found as a function of the old commitments at the nodes and the powers of the Lagrange basis polynomials as described in Section~\ref{sec:kzg-update}:
\begin{IEEEeqnarray*}{C}
C'_{b_0,\ldots,b_h} \xleftarrow[]{} m'_x,\quad\quad C'_{b_0,\ldots,b_j} \xleftarrow[]{} C_{b_0,\ldots,b_j} [L_{b_{j+1}}]^{(u'_{b_0,\ldots,b_{j+1}}-u_{b_0,\ldots,b_{j+1}})}
\end{IEEEeqnarray*}
When $k$ messages are updated, the above calculation is repeated $k$ times for each update.

Update information $U$ consists of the new values of the KZG commitments on the path from the updated messages to the Verkle root akin to the Merkle trees, ordered in the canonical top-to-bottom and left-to-right order.

\subsection{Verkle Proofs}
Let $\pi_x$ denote the Verkle proof of some message $m_x$ at index $x$, $(b_1,\ldots,b_h) = \bin(x)$: $\pi_x = ((C_{b_0,b_1}, \ldots, C_{b_0,\ldots,b_{h-1}}), ([g(X)], \pi))$.
We define $\pi^f_x$ as the opening proof for index $x$ within polynomial $f$.
We observe that the commitment $[g(X)]$ and the proof $\pi$ can be expressed as functions of the opening proofs of the inner nodes $u_{b_0,b_1}, \ldots, u_{b_0,\ldots,b_h}$ at the indices $b_1,\ldots,b_h$ within the polynomials $f_{b_0}, \ldots, f_{b_0,\ldots,b_{h-1}}$, respectively.
Namely, $[g(X)]$ is
\begin{IEEEeqnarray*}{rCl}
\left[\sum_{j=0}^{h-1} {r}^j \frac{f_{b_0,\ldots,b_j}(X)-u_{b_0,\ldots,b_{j+1}}}{X-b_{j+1}}\right] &=& \prod_{j=0}^{h-1} \left[\frac{f_{b_0,\ldots,b_j}(X)-u_{b_0,\ldots,b_{j+1}}}{X-b_{j+1}}\right]^{r^j} \\
&=& \prod_{j=0}^{h-1} \left(\pi^{f_{b_0,\ldots,b_j}}_{b_{j+1}}\right)^{r^j}
\end{IEEEeqnarray*}
Similarly, the opening proof $\pi=\pi^{(h-g)}_t$ for index $t$ within the polynomial $h(X)-g(X)$ can be expressed as follows (see Appendix~\ref{sec:appendix-derivation} for details):
\begin{IEEEeqnarray*}{rCl}
\left[\frac{h(X)-g(X)-(h(t)-g(t))}{X-t}\right] &=& \prod_{j=0}^{h-1} \left[\frac{f_{b_0,\ldots,b_j}(X)-u_{b_0,\ldots,b_{j+1}}}{X-b_{j+1}}\right]^{\frac{r^j}{t-b_{j+1}}} \\
&=& \prod_{j=0}^{h-1} \left(\pi^{f_{b_0,\ldots,b_j}}_{b_{j+1}}\right)^{\frac{r^j}{t-b_{j+1}}}
\end{IEEEeqnarray*}

We assume that each user holding the Verkle proof $\pi_x$ for some index $x$, $(b_1,\ldots,b_h) = \bin(x)$, also holds the opening proofs $\pi^{f_{b_0,\ldots,b_j}}_{b_{j+1}}$, $j \in [h]$, \emph{in memory}.
As we will see in the next section, the user also holds the KZG commitments at the children of the inner nodes on the path from the root to the message $m_x$, \ie $C_{b_0,\ldots,b_j,i}$ for all $j \in [h]$ and $i \in [c]$ \emph{in memory}.
These opening proofs and KZG commitments are not broadcast as part of any proof; however, they are needed for the user to locally update its Verkle proof after message updates.

\subsection{Proof Update}

When the messages $(i_j, m_{i_j})_{j \in [k]}$ are updated to $(i_j, m'_{i_j})_{j \in [k]}$, to calculate the new Verkle proof $\pi'_x$, the user must obtain the new commitments $C'_{b_0}, \ldots, C'_{b_0,\ldots,b_{h-1}}$ on the path from the root to message $m_x$, the new commitment $[g'(X)]$ and the new opening proof $\pi'$ for the polynomial $h'(X)-g'(X)$ at index $t'= H(r',[g'(X)])$.
Message updates change the commitments at the inner nodes, which in turn results in new polynomials $f_{b_0,\ldots,b_j}$, $j \in [h]$.
Suppose each polynomial $f_{b_0,\ldots,b_j}$, $j \in [h]$, is updated so that
\begin{IEEEeqnarray*}{C}
f'_{b_0,\ldots,b_j}(X) = f_{b_0,\ldots,b_j}(X) + \sum_{i=0}^{c-1}(f'_{b_0,\ldots,b_j}(i)-f_{b_0,\ldots,b_j}(i)) L_{i}(X),
\end{IEEEeqnarray*}
where, by definition, $f'_{b_0,\ldots,b_j}(i)-f_{b_0,\ldots,b_j}(i) = u'_{b_0,\ldots,b_j,i}-u_{b_0,\ldots,b_j,i} = H(C'_{b_0,\ldots,b_j,i})-H(C_{b_0,\ldots,b_j,i})$.
Then, given the new and the old commitments $(C_{b_0,\ldots,b_j,i},C'_{b_0,\ldots,b_j,i})$ for $i \in [c]$ and $j \in [h]$, the table of Lagrange basis polynomials, and using the technique in Section~\ref{sec:kzg-update}, the new opening proofs $\tilde{\pi}^{f_{b_0,\ldots,b_j}}_{b_{j+1}}$ after the message updates can be computed as follows for $j \in [h]$:
\begin{IEEEeqnarray*}{C}
\tilde{\pi}^{f_{b_0,\ldots,b_j}}_{b_{j+1}} = \pi^{f_{b_0,\ldots,b_j}}_{b_{j+1}} \prod_{i=0}^{c-1}\left[\frac{L_{i}(X)-L_{i}(b_{j+1})}{X-b_{j+1}}\right]^{(H(C'_{b_0,\ldots,b_j,i})-H(C_{b_0,\ldots,b_j,i}))},
\end{IEEEeqnarray*}
where $\left[\frac{L_{i}(X)-L_{i}(b_{j+1})}{X-b_{j+1}}\right]$ is the opening proof of the Lagrange basis polynomial $L_i(X)$ at index $b_{j+1}$.
Once the new opening proofs are found, the new commitment $[g'(X)]$ and the new proof $\pi'$ become
\begin{IEEEeqnarray*}{C}
[g'(X)] = \prod_{j=0}^{h-1} \left(\tilde{\pi}^{f_{b_0,\ldots,b_j}}_{b_{j+1}}\right)^{{r'}^j}, \quad\quad \pi' = \prod_{j=0}^{h-1} \left(\tilde{\pi}^{f_{b_0,\ldots,b_j}}_{b_{j+1}}\right)^{\frac{{r'}^j}{t'-b_{j+1}}}
\end{IEEEeqnarray*}
where $r'=H(C'_{b_0,b_1},..,C'_{b_0,\ldots,b_{h-1}},u'_{b_0,b_1},..,u'_{b_0,\ldots,b_h},b_1,..,b_h)$ and
$t'=H(r',[g'(X)])$.
Note that both $r'$ and $t'$ can be calculated by the user given the new KZG commitments $C'_{b_0,\ldots,b_j,i}$ for all $i \in [c]$ and $j \in [h]$.

Finally, to retrieve the new KZG commitments $C'_{b_0,\ldots,b_j,i}$ for all $i \in [c]$ and $j \in [h]$, the user inspects the commitments published as part of the update information $U$:
$C'_{b_0,b_1,\ldots,b_{j-1},i} \xleftarrow[]{} U[b_0,b_1,\ldots,b_{j-1},i]$ if $U[b_0,b_1,\ldots,b_{j-1},i] \neq \bot$ and $C'_{b_0,b_1,\ldots,b_{j-1},i} \xleftarrow[]{} C_{b_0,b_1,\ldots,b_{j-1},i}$ otherwise, for all $i \in [c]$ and $j \in [h]$.

In Verkle trees, the user cannot calculate the effect of an updated message on an arbitrary inner node without the knowledge of the inner nodes on the path from the message to the target node.
For instance, suppose $U[b_0,b_1,\ldots,b_{j-1},i] = \bot$ for some $i \in [c]$ and $j \in [h]$, and the user wants to calculate the effect of an update from $m_x$ to $m'_x$ on $C'_{b_0,\ldots,b_{j-1},i,\tilde{b}_{j+1},\ldots,\tilde{b}_h}$, $\bin(x) = (b_1,\ldots,b_{j-1},i,\tilde{b}_{j+1},\ldots,\tilde{b}_h)$ and $\tilde{b}_j = i$.
Then, for each $\ell \in \{j,\ldots,h-1\}$, the user have to find
\begin{IEEEeqnarray*}{rCl}
C'_{b_0,\ldots,\tilde{b}_j,\ldots,\tilde{b}_h} &\xleftarrow[]{}& m'_x \\ C'_{b_0,\ldots,\tilde{b}_j,\ldots,\tilde{b}_\ell} &\xleftarrow[]{}& C_{b_0,\ldots,\tilde{b}_j,\ldots,\tilde{b}_\ell} [L_{\tilde{b}_{\ell+1}}]^{(u'_{b_0,\ldots,\tilde{b}_j,\ldots,\tilde{b}_{\ell+1}}-u_{b_0,\ldots,\tilde{b}_j,\ldots,\tilde{b}_{\ell+1}})},
\end{IEEEeqnarray*}
where $C'_{b_0,\ldots,\tilde{b}_j,\ldots,\tilde{b}_\ell}$ are the commitments on the path from the target commitment $C_{b_0,b_1,\ldots,b_{j-1},i}$ to the message $m_x$.
Hence, the user has to know the original commitments on the path from the message to the target commitment, \ie, keep track of inner nodes, which contradicts with the idea of stateless clients.
This shows the necessity of publishing all of the updated inner nodes as part of the update information.

\subsection{Complexity}
Suppose each KZG commitment is of size $|G|$ and each hash $H(C)$ of a KZG commitment, \ie each inner node, has size $|H|$.
Then, updating a single message results in one update at each level of the Verkle tree and requires $\Theta(h|H|)$ group operations.
Thus, when $k$ messages are updated, the new Verkle root can be found after $\Theta(kh|H|)$ group operations.
As $U$ consists of the published KZG commitments at the inner nodes and their indices, $|U| = \Theta(k \log_c{(N)}(\log{(N)}+|G|)) = \Tilde{\Theta}(k)|G|$, which implies $g_1(k) = k$.

The user can replace each KZG commitment at the children of the inner nodes from the root to its message in $\Theta(\log{(N)}+|G|)$ time by running a binary search algorithm over $U$.
Since there are $ch$ such commitments to be updated, \ie, $C_{b_0,\ldots,b_j,i}$, $i \in [c]$ and $j \in [h]$, updating these commitments takes $\Theta(c h (\log{(N)}+|G|)) = \Tilde{\Theta}(1)$ time.

Upon obtaining the new commitments $C'_{b_0,\ldots,b_{j-1},i}$, $i \in [c]$, $j \in [h]$, with access to the table of Lagrange basis polynomials, the user can update each opening proof $\pi_{b_{j+1}}$ (for the function $f_{b_0,\ldots,b_j}$),
$j \in [h]$, with $\Theta(c|H|)$ group operations.
Since there are $h$ such proofs, updating them all requires $\Theta(c h |H|)$ group operations.
Given the new proofs, computing the new commitment $[g'(X)]$ and proof $\pi'$ requires $\Theta(h |H|)$ group operations.
This makes the total complexity of updating a Verkle proof $\Theta(c h + 2 h) |H| T_G + \Theta(c h (\log_c{(N)}+|G|))$.
For a constant $c$ and $h = \log_c{(N)}$, this implies a worst-case time complexity of $\Tilde{\Theta}(1) |H| T_G$ for Verkle proof updates, \ie, $g_2(k) = 1$.

\subsection{A Concrete Evaluation}
\label{sec:evaluation}

We now estimate the size of the update information and the number of group operations to update a Verkle opening proof after observing an Ethereum block consisting of ERC20 token transfers.
As in Sections~\ref{sec:eval-homomorphic-tree} and~\ref{sec:eval-homomorphic}, suppose the block has the target size of $15$ million gas~\cite{gasandfees}, and each token transfer updates the balance of two distinct accounts stored at separate leaves of the Verkle tree.
Then, there are $\sim 230$ such transactions in the block, and the block updates $k = 460$ accounts.
We assume that the Verkle tree has degree $256$ (\cf~\cite{verkle}) and commits to $256^3$ accounts as in Section~\ref{sec:eval-homomorphic-tree}.
Then, each proof consists of $2$ KZG commitments, $C_{\bot,b_1}$ and $C_{\bot,b_1,b_2}$ and a multiproof consisting of the commitment $[g(X)]$ and proof $\pi'$.
These components are elements of the pairing-friendly elliptic curve $\mathrm{BLS}12\_381$ and consist of $|G| = 48$ bytes~\cite{verkle}.
This implies a proof size of $(\log_c{(N)}+1)|G| = 192$ bytes (excluding the message at the leaf and its hash value; adding those makes it $272$ bytes).

When $460$ accounts are updated, in the worst-case, the update information has to contain $k \log_c(N) (\log(N)+|G|) = 460 \times 3 \times (24+48)$ Bytes, \ie, $99.4$ kBytes. 
This is comparable to the size of the Ethereum blocks, which are typically below $125$ kBytes~\cite{block-size}.
Hence, even though the update information of Verkle trees is linear in $k$, it does not introduce a large overhead beyond the block data.
Note that the runtime of the proof updates are constant and do not scale in the number of updated messages $k$, or the Ethereum block size.

On the other hand, in the worst case, an opening proof can be updated after $c \log{(c)} + 2 \log_c{(N)}$ group exponentiations.
Following the benchmarks published in \cite{blst-benchmark} for the specified curve, these operations take $(c + 2) \log_c{(N)}\ 0.000665471\text{ s} = 0.52$ seconds on commodity hardware, given a runtime of $665471$ nanoseconds per exponentiation of a group element with a message hash value.
This is again comparable to the $12$ second inter-arrival time of Ethereum blocks.

Table~\ref{tab:comparison} compares the Verkle proof size $|\pi| = (\log_c{(N)}+1) |G|$, update information size $|U| = k \log_c(N) (\log_c{N}+|G|)$, the upper bound $(c + 2) \log_c{N}$ on the number of group exponentiations needed for a single proof update and the estimated time it takes to do these operations on a commodity hardware for different values of $c$, the Verkle tree degree, while keeping the number of accounts and the updated accounts fixed at $2^{24}$ and $460$ respectively.
The table shows the trade-off between the Verkle proof and update information size on one size and update complexity on the other.
\begin{table}[]
\centering
\begin{tabular}{|c|c|c|c|c|}
 \hline
  $c$ & $|\pi|$ (Bytes) & $|U|$ (kBytes) & $\#$ Group Exp. & Time (s) \\  [0.5ex]
 \hhline{|=|=|=|=|=|}
  $2$ & 1200 & 794.9 & 96 & 0.064 \\
  \hline
  $4$ & 628 & 397.4 & 72 & 0.048 \\
  \hline
  $16$ & 336 & 198.7 & 108 & 0.072 \\
  \hline
  $64$ & 240 & 132.5 & 264 & 0.18 \\
  \hline
  $256$ & 192 & 99.4 & 774 & 0.52 \\
 \hline
\end{tabular}
\vspace{0.5cm}
\caption{For different values of the tree degree~$c$, the table shows the
Verkle proof size which is $|\pi| = (\log_c{(N)}+1)|G|$;
the update information size which is $|U| = k \log_c{(N)} (\log{(N)}+|G|)$;
the number of group exponentiations for a single proof update which is $(c + 2) \log_c{(N)}$; 
and the estimated time for a single proof update. 
We use $N = 2^{24}$ accounts in total, $k=460$ updates at the accounts, and
a group element size of $|G|=48$ bytes.
}
\vspace{-0.5cm}
\label{tab:comparison}
\end{table}

Finally, we calculate the total size of the public parameters required for updating Verkle trees.
For this purpose, we find the total size of the KZG commitments of Lagrange basis polynomials and the opening proofs for each Lagrange basis polynomial at each index $i \in [c]$.
Hence, the total size of the public parameters becomes $(c + c^2)|G|$, which is equal to $3.16$ MBytes.

Comparing Tables~\ref{tab:comparison} and~\ref{tab:comparison-homomorphic-tree} shows that the update information size and the proof update time of Verkle trees at $c = 256$, the parameter used by the Ethereum implementation~\cite{go-ethereum-verkle}, are outperformed by the homomorphic tree construction using $\nu = 1/2$.
Despite this, Verkle trees with a large degree $c = 256$ have small opening proofs of size $192$ Bytes, whereas the proof size of the AMT construction stands at $1.2$ kBytes.
In terms of the public parameter size, Verkle trees also outperform homomorphic tree constructions by around a factor of $10000$.
Whereas the proof size of AMTs scale as $O(\log{(N)})$ in the number $N$ of messages, proof size of Verkle trees remain constant at size $O(1/\epsilon)$, when their degree grows as $\Theta(N^\epsilon)$ for some fixed $\epsilon > 0$.
However, as $N$ grows, larger degree implies larger public parameters for Verkle trees.

Comparing Table~\ref{tab:comparison} with Table~\ref{tab:comparison-homomorphic} shows that a Verkle tree with any given degree $c$, $1 < c \leq 256$, significantly outperforms the existing homomorphic Merkle trees in terms of almost all of proof size, update information size and proof update time.
However, unlike Verkle trees, homomorphic Merkle trees are post-quantum secure, and do not require trusted setup or public parameters.

\section{Lower Bound}
\label{sec:lower-bound}

Finally, we prove the optimality of our VC scheme with sublinear update by proving a lower bound on the size of the update information given an upper bound on the complexity of proof updates.
The lower bound is shown for VCs that satisfy the following \emph{\proofbinding} property.
It formalizes the observation that for many dynamic VCs (\eg, Merkle trees \cite{merkle}, Verkle trees \cite{verkle}, KZG commitments \cite{kate}, RSA based VCs \cite{rsa-vc}) including homomorphic Merkle trees, the opening proof for a message at some index can often act as a commitment to the vector of the remaining messages.
\begin{definition}
\label{def:proof-binding}
A VC scheme is said to be \emph{\proofbinding} if the following probability is negligible in $\lambda$ for all PPT adversaries $\mathcal{A}$:
\begin{IEEEeqnarray*}{C}
\Pr\left[\subalign{\textsc{Verify}_{pp}(C, m_{i^*}, i^*, \pi) &= 1 \\ \land \textsc{Verify}_{pp}(C', m_{i^*}, i^*, \pi) &= 1} \mathbf{\colon} \subalign{pp \xleftarrow{} \textsc{KeyGen}(1^\lambda, N);& \\ \pi, m_{i^*}, (m_0, \ldots, m_{i^*-1}, m_{i^*+1}, \ldots, m_{N-1}),& \\ (m'_0, \ldots, m'_{i^*-1}, m'_{i^*+1}, \ldots, m'_{N-1}) \xleftarrow{} \mathcal{A}(pp);& \\ (m_0, \ldots, m_{i^*-1}, m_{i^*+1}, \ldots, m_{N-1})& \\ \neq (m'_0, \ldots, m'_{i^*-1}, m'_{i^*+1}, \ldots, m'_{N-1});& \\ \textsc{Commit}_{pp}(m_0, \ldots, m_{i^*-1}, m_{i^*}, m_{i^*+1}, \ldots, m_{N-1}) = C;& \\ \textsc{Commit}_{pp}(m'_0, \ldots, m'_{i^*-1}, m_{i^*}, m'_{i^*+1}, \ldots, m'_{N-1}) = C'&}\right]
\end{IEEEeqnarray*}
\end{definition}
In the definition above, the opening proof while committing to the vector of remaining messages, need not be of the same format as the original VC.
For instance, for RSA accummulators, the opening proof is itself an RSA accummulator, whereas for Merkle trees, the opening proof is not a Merkle tree root, but contains a sequence of inner nodes and the index of the opened message.
Nevertheless, the proof and the message together act as a commitment to the vector of remaining messages.

The \proofbinding property is distinct from the position-binding (security) property of VCs: whereaas position-binding states the difficulty of opening the VC to two different messages at the same index, \proofbinding implies the difficulty of creating two VCs with different messages that open to the \emph{same} message at some index $i$ with the exact \emph{same} proof.

We next state the main lower bound for \proofbinding VCs.
\begin{theorem}
\label{thm:lower-bound}
Consider a dynamic and \proofbinding VC such that for every PPT adversary $\mathcal{A}$, it holds that 
\begin{IEEEeqnarray*}{C}
\Pr\left[\substack{\textsc{Verify}_{pp}(C, m, i, \pi_i) = 1 \land \\ \textsc{Verify}_{pp}(C, m', i, \pi'_i) = 1\ \land\ m \neq m'}\ \colon \substack{pp \xleftarrow{} \textsc{KeyGen}(1^\lambda, N) \\ (C, m, m', \pi_i, \pi'_i) \xleftarrow{} \mathcal{A}(pp)}\right] \leq e^{-\Omega(\lambda)}.
\end{IEEEeqnarray*}
Then, for this VC, if $g_2(k)  = O(k^{1-\nu})$, then $g_1 = \Omega(k^\nu)$ for all $\nu \in (0,1)$.
\end{theorem}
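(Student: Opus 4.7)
The plan is to argue by contradiction: assume a VC that is \proofbinding and position-binding with $g_1(k)=o(k^\nu)$ and $g_2(k)=O(k^{1-\nu})$ for some $\nu\in(0,1)$, and derive a contradiction.

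First I would set up a probabilistic experiment. Fix a set of $k$ update indices disjoint from a target index $i^*$, and draw the corresponding diffs $\delta_1,\ldots,\delta_k$ independently and uniformly from $\mathcal{M}$. Let $\Delta$ denote the resulting random vector, so that $H(\Delta)\geq k\lambda$; the update information $U$ and new commitment $C'$ are deterministic functions of $\Delta$ and the (fixed) initial state. Next, I would model the proof update algorithm executed by the user at $i^*$ as a decision-tree query algorithm on the combined input $(U,\Delta)$: its runtime bound $g_2(k)=O(k^{1-\nu})$ limits it to at most $O(k^{1-\nu})$ word-sized reads, splitting into some $\Delta$-reads (of total entropy at most $O(k^{1-\nu}\lambda)$) and some $U$-reads (whose total distinct content is at most $|U|=o(k^\nu\lambda)$ bits).

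The core of the argument is then a two-sided comparison for the entropy of $\pi'_{i^*}$ (conditioned on the initial state and on $m_{i^*}$). On the cryptographic side, I would use \proofbinding together with position-binding to show that $(\pi'_{i^*},m_{i^*})$ must computationally bind $C'$, and hence $\Delta$: a PPT adversary that produces two vectors $\Delta_1\neq\Delta_2$ with $C'(\Delta_1)\neq C'(\Delta_2)$ but identical $\pi'_{i^*}$ is an immediate \proofbinding attack. By a Shannon-style entropy counting applied to this ensemble (aggregated across a carefully chosen subset of users so that the combined reads from $U$ are bounded by $|U|$ rather than by the per-user budget times the number of users), this forces the reads that went into computing $\pi'_{i^*}$ to carry $\Omega(k^\nu\lambda)$ bits of information beyond what the at most $O(k^{1-\nu})$ direct $\Delta$-reads already cover. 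Since that additional information must live in $U$, we obtain $|U|=\Omega(k^\nu\lambda)$, contradicting the assumption.

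The hardest step, I expect, is converting the computational binding guarantee into an information-theoretic bound tight enough to isolate the exponent $\nu$ in the right place. Because $\pi'_{i^*}$ is succinct, info-theoretic collisions among random $\Delta$'s exist in abundance whenever $|U|$ is too small, while \proofbinding is only a computational property. I anticipate that this step requires a birthday-style reduction that samples $\mathrm{poly}(\lambda)$ independent vectors $\Delta$ and simulates the scheme on each, promoting the existential collisions into a genuine PPT attack on \proofbinding whenever $|U|=o(k^\nu\lambda)$; controlling the exponent precisely (rather than obtaining only the weaker $|U|=\Omega(k\lambda)$) requires matching the $k^{1-\nu}$-read budget against a carefully chosen ensemble so that the aggregated $U$-channel is exactly what must carry $\Omega(k^\nu\lambda)$ bits.
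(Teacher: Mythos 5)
Your proposal is correct and follows essentially the same route as the paper: you use \proofbinding to turn each user's updated proof into a binding commitment to the $k-O(k^{1-\nu})$ updates that user did not read, observe that the read budget forces $\Omega(k^\nu)$ users with distinct unread sets, and then apply a Shannon-entropy count showing each such commitment must contribute $\Omega(\lambda)$ fresh bits of $U$ on pain of a collision found with probability exceeding $e^{-\Omega(\lambda)}$. The paper handles the computational-to-information-theoretic step by directly arguing that $o(\lambda)$ conditional entropy of some commitment yields a collision with probability $2^{-o(\lambda)}$ (rather than your birthday-style amplification), but this is a minor variation within the same argument.
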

To prove the lower bound, we first show that \proofbinding implies that $(i^*, m_{i^*}, \pi)$ is a binding commitment to the rest of the vector.
\begin{lemma}
\label{lem:proof-binding}
Consider a dynamic and \proofbinding VC, where $\pi$ is the correctly generated opening proof for the message $m_i$ at some index $i$.
Then, for any $i \in [N]$, it holds that the tuple $(i, m_i, \pi)$ is a binding commitment to the vector of messages $m_j$, $j \in [N]$, $j \neq i$.
\end{lemma}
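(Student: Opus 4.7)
The plan is to prove Lemma 2 by a direct reduction to the \proofbinding property given in Definition~4. First, I need to unpack what it means for the tuple $(i, m_i, \pi)$ to be a \emph{binding commitment} to the remaining vector $(m_j)_{j \neq i}$: no PPT adversary should be able to output two distinct vectors $(m_j)_{j \neq i}$ and $(m'_j)_{j \neq i}$ such that, after inserting $m_i$ at position $i$ in each, both resulting commitments (generated by $\textsc{Commit}_{pp}$) accept $\pi$ as a valid opening of $m_i$ at index $i$. This is precisely the event whose probability is bounded by the \proofbinding definition.

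Given this reformulation, the reduction is essentially syntactic. Suppose, for contradiction, that the binding property fails: there is a PPT adversary $\mathcal{B}$, fed $pp$, that with non-negligible probability outputs an index $i$, a message $m_i$, a proof $\pi$, and two distinct remainders $(m_j)_{j \neq i} \neq (m'_j)_{j \neq i}$ satisfying the opening condition above with respect to $C \xleftarrow{} \textsc{Commit}_{pp}\bigl((m_j)_{j \in [N]}\bigr)$ and $C' \xleftarrow{} \textsc{Commit}_{pp}\bigl((m'_j)_{j \in [N]}\bigr)$ (where position $i$ holds $m_i$ in both). I would then build $\mathcal{A}$ against \proofbinding that simply runs $\mathcal{B}(pp)$ and forwards the tuple $(\pi, m_i, (m_j)_{j \neq i}, (m'_j)_{j \neq i})$ with $i^* = i$. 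By construction $\mathcal{A}$ wins the \proofbinding experiment with exactly the same probability that $\mathcal{B}$ breaks binding, contradicting the assumed negligibility.

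The one point that warrants care rather than difficulty is the hypothesis that $\pi$ is ``correctly generated'' for $m_i$: this guarantees that at least one valid remainder (the honest one) exists so that binding is nontrivial, and also that the failing adversary $\mathcal{B}$ is providing a genuine second opening rather than two degenerate cases. I do not foresee a real obstacle here, since Definition~4 is already phrased in terms of the $\textsc{Commit}_{pp}$ algorithm producing $C$ and $C'$, which matches the semantics of the binding statement exactly. The proof is therefore a short reduction that essentially re-labels the adversary's output, and the lemma should follow with the same negligible advantage bound as \proofbinding itself.
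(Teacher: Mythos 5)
Your proposal is correct and matches the paper's own argument: the paper likewise observes that the \proofbinding experiment is, after relabeling, exactly the binding game for the commitment $(i, m_i, \pi) = \textsc{NewCommit}_{pp}((m_j)_{j \neq i})$, so the lemma follows immediately from the definition. The only cosmetic difference is that you phrase it as a contrapositive reduction while the paper states the identification directly.
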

\begin{proof}
Since the VC is \proofbinding, with overwhelming probability, no PPT adversary $\mathcal{A}$ can find an opening proof $\pi^*$, an index $i^*$, a message $m^*$ and two sequences of messages 
such that 
\begin{IEEEeqnarray*}{C}
(m_1, \ldots, m_{i^*-1}, m_{i^*+1}, \ldots, m_{N-1}) \neq(m'_1, \ldots, m'_{i^*-1}, m'_{i^*+1}, \ldots, m'_{N-1})
\end{IEEEeqnarray*}
and $\textsc{Verify}_{pp}(C, m_{i^*}, i^*, \pi) = \textsc{Verify}_{pp}(C', m_{i^*}, i^*, \pi) = 1,$
where $C$ and $C'$ are commitments to the message sequences $(m_1, \ldots, m_{i^*-1}, m_{i^*}, m_{i^*+1}, \ldots, m_{N-1})$ and $(m'_1, \ldots, m'_{i^*-1}, m_{i^*}, m'_{i^*+1}, \ldots, m'_{N-1})$.
Thus, it holds that the tuple $(i, m_i, \pi)$ is a binding commitment to the vector of messages $m_j$, $j \in [N]$, $j \neq i$, with the following new commitment function:
\begin{IEEEeqnarray*}{C}
\textsc{NewCommit}_{pp}((m_j)_{j \in [N], j \neq i}) = (i, m_i, \textsc{Open}_{pp}(m_i, i, \aux)),
\end{IEEEeqnarray*}
where $\aux = \textsc{Commit}_{pp}(m_0, \ldots, m_{N-1}).\aux$.
\end{proof}
The following lemma shows that all randomized VCs can be derandomized to obtain a deterministic and secure VC as we do not use hiding commitments in this work.
\begin{lemma}
\label{lem:derandomization}
Consider a VC $\PI$, where the commitment is a random function of the public parameters $pp$ and the committed messages.
Let $\PI'$ denote the VC that is the same as $\PI$, except that the randomness is fixed.
Then, $\PI'$ is a correct and secure VC with at most the same upper bound on the error probability.
\end{lemma}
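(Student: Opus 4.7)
The plan is to fold Commit's randomness into the public parameters, thereby obtaining a scheme whose commit algorithm is a deterministic function of its inputs. Concretely, I would define $\PI'$ by having its key-generation routine first invoke $\PI.\textsc{KeyGen}$ to obtain $pp$, then sample a uniformly random string $r^*$ of the length used by $\PI.\textsc{Commit}$, and output $pp' = (pp, r^*)$. The new commit routine $\PI'.\textsc{Commit}_{pp'}$ runs $\PI.\textsc{Commit}_{pp}$ using $r^*$ as its coins; all other algorithms are inherited unchanged. This matches the description of $\PI'$ as $\PI$ with the randomness fixed, while keeping $|pp'|$ polynomial in $\lambda$ since Commit's randomness is itself polynomially bounded.

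For correctness, I would observe that for any fixed message vector and index $i$, the failure event $\textsc{Verify}_{pp'}(C, m_i, i, \pi_i) = 0$ in $\PI'$ --- with the probability taken over the sampling of $pp'$ in $\PI'.\textsc{KeyGen}$ --- is, under an identical joint distribution on $(pp, r^*)$, exactly the corresponding failure event in $\PI$ under its random draw of $pp$ and of Commit's coins. Hence the correctness error bound of $\PI$ transfers verbatim to $\PI'$.

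For position-binding, I would give a black-box reduction. Given any PPT adversary $\mathcal{A}'$ attacking $\PI'$, define $\mathcal{A}$ attacking $\PI$ as follows: on input $pp$, sample $r^*$ uniformly, form $pp' = (pp, r^*)$, run $\mathcal{A}'(pp')$, and output whatever it outputs. Because $\textsc{Verify}$ is unchanged between the two schemes and does not consult Commit's coins, the verification checks in the two games evaluate identically on the same tuple $(C, m, m', \pi_i, \pi'_i)$. The joint distribution of $(pp', \mathcal{A}'(pp'))$ in the $\PI'$ game and that of $((pp, r^*), \mathcal{A}(pp))$ in the simulated $\PI$ game coincide, so $\mathcal{A}$'s success probability in the position-binding game equals $\mathcal{A}'$'s. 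Any negligible bound on the position-binding error for $\PI$ therefore yields an identical bound for $\PI'$.

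The main point worth spelling out is that this flavour of "fixing the randomness" is the one needed by the downstream lower-bound argument: placing $r^*$ inside $pp$ makes $\textsc{Commit}$ deterministic conditional on $pp$, which is what the subsequent arguments require --- for instance, viewing the tuple $(i, m_i, \pi)$ as a binding commitment to the remaining messages via Lemma~\ref{lem:proof-binding} relies on Commit being a function, not a relation. No combinatorial or entropy argument is needed; the lemma reduces to a distributional identity between $\PI$ and $\PI'$ together with the observation that Verify is oblivious to $r^*$.
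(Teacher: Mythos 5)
Your proof is correct but proceeds differently from the paper's. The paper argues directly about collisions of the commit map: it asserts that binding means no PPT adversary can exhibit $\mathbf{m}\neq\mathbf{m}'$ and coins $R,R'$ with $\PI(\mathbf{m},R)=\PI(\mathbf{m}',R')$, and then observes that the fixed-randomness collision $\PI(\mathbf{m},R^*)=\PI(\mathbf{m}',R^*)$ is a sub-event of this, so the same bound holds for \emph{every} fixed $R^*$. You instead sample $r^*$ once, fold it into the public parameters, and give a black-box reduction for the position-binding game, using that $\textsc{Verify}$ ignores the commit coins so the simulation is perfect. Both arguments are sound. The paper's buys a universally quantified conclusion (any fixed $R^*$ works), while yours guarantees security only for a random $r^*$ hard-coded at key generation --- which suffices for the downstream use in Theorem~\ref{thm:lower-bound}, where what matters is that the commitment is deterministic conditioned on the public parameters; your closing remark making this explicit is a useful addition the paper leaves implicit. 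One mismatch to be aware of: the property Theorem~\ref{thm:lower-bound} actually consumes is collision resistance of the derandomized commit map (hardness of finding $\mathcal{S}\neq\mathcal{S}'$ with $\PI(\mathcal{S})=\PI(\mathcal{S}')$), which the paper's proof addresses head-on; your reduction establishes the position-binding property of the original definition for $\PI'$, and bridging from that to commit-collision-resistance still requires the routine step of invoking correctness to open a colliding commitment at an index where the two vectors differ.
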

\begin{proof}
Let $R$ denote the sequence of bits sampled uniformly at random from the set $\mathcal{R}$ to instantiate the VC $\PI$.
Since $\PI$ is binding, no PPT adversary $\mathcal{A}$ can find two different sequences of messages $\mathbf{m}$ and $\mathbf{m'}$ such that $\PI(\mathbf{m}, R) = \PI(\mathbf{m'}, R')$ for some $R,R' \in \mathcal{R}$, except with negligible probability.
This implies that for any fixed $R^* \in \mathcal{R}$, no PPT adversary $\mathcal{A}$ can find two different sequences of messages $\mathbf{m}$ and $\mathbf{m'}$ such that $\PI(\mathbf{m}, R^*) = \PI(\mathbf{m'}, R^*)$, except with negligible probability.
Hence, the commitment scheme $\PI'(.) = \PI(., R^*)$ is a position-binding, \ie, secure VC. 
Its correctness follows from the correctness of $\PI$.
\end{proof}
Finally, equipped with these lemmas, we can prove Theorem~\ref{thm:lower-bound} for dynamic and \proofbinding VCs.
\begin{proof}[Proof of Theorem~\ref{thm:lower-bound}]
Suppose the messages $m_{i_j}$, $j \in [k]$, are updated to $m'_{i_j}$.
Define $\mathcal{S}$ as the sequence $(m'_{i_j})_{j \in [k]}$, and let $m'_i = m_i$ for $i \notin \{i_j \colon j \in [k]\}$.
Let $\mathcal{P}_i$, $i \in [N]$, denote the user that holds the opening proof $\pi_i$ for the message $m_i$ at index $i$, and aims to calculate the new proof $\pi'_i$ for the message $m'_i$ using $\pi_i$, the update information $U$ and the old and the new sequences of messages $m_i, m'_i$, $i \in [N]$.
Suppose $g_2 = O(k^{1-\nu})$.
Then, there exists a constant $\alpha$ such that each user can read at most $\alpha k^{1-\nu}$ of the updated messages while updating its opening proof.
Let $\mathcal{S}_i \subseteq (m'_{i_j})_{j \in [k]}$ denote the sequence of updated messages and their indices, which were \emph{not} observed by $\mathcal{P}_i$, and $\overline{\mathcal{S}}_i = \mathcal{S} \setminus \mathcal{S}_i$ denote the sequence read by $\mathcal{P}_i$.
Here, $|\mathcal{S}|$ denotes the number of messages within the sequence $\mathcal{S}$.
Since $\mathcal{P}_i$ is assumed to know $m'_i$, it must be that $m'_i \in \overline{\mathcal{S}}_i$. 

We next show that each user $\mathcal{P}_i$ that successfully updates its opening proof must download enough bits of $U$ to generate a binding, deterministic commitment to the set $\mathcal{S}_i$.
By Lemma~\ref{lem:proof-binding}, the tuple $(i, m'_i, \pi'_i)$ is a binding commitment to the sequence of messages $(m'_j)_{j \in [N], j \neq i}$.
This implies that the tuple $(i, \overline{\mathcal{S}}_i, \pi'_i)$ is a binding commitment to the sequence $\mathcal{S}_i$. 
By Lemma~\ref{lem:derandomization}, the commitment $(i, \overline{\mathcal{S}}_i, \pi'_i)$ can be de-randomized to obtain a deterministic commitment $C_i$ to the sequence $\mathcal{S}_i$ (with at most the same upper bound on the error probability).
Let $\PI$ denote the deterministic VC scheme such that $C_i = \PI(\mathcal{S}_i)$. 
Since $\PI$ is a deterministic function given the public parameters, and the updated messages are sampled independently and uniformly at random, then $I(\mathcal{S}_i;\{m_i\}_{i \in N},\overline{\mathcal{S}}_i|pp) = 0$, where $I(.;.)$ is the mutual information.
Moreover, as $\pi_i$ is a function of the old messages $\{m_i\}_{i \in N}$ and the randomness of the original VC, $I(C_i; \pi_i|pp) = 0$.
Hence, $C_i = f(U, i, \{m_i\}_{i \in N}, \pi)$ is a deterministic function of the update information $U$.

For all $i \in [k]$, it holds that $|\mathcal{S}_i| \geq k - \alpha k^{1-\nu}$ and $m'_i \notin \mathcal{S}_i$.
Given these constraints, the minimum number of distinct sequences $\mathcal{S}_i$ is $\frac{k}{\alpha k^{1-\nu}} = \frac{k^\nu}{\alpha}$.
For an appropriately selected $\beta$ that will be defined later, without loss of generality, let $\mathcal{S}_0, \ldots, \mathcal{S}_{M-1}$ denote the first 
\begin{IEEEeqnarray*}{C}
M = \min\left(\left\lfloor \frac{k^\nu}{\beta} - \frac{\alpha}{\beta} - \frac{\lambda}{\beta k^{1-\nu}} \right\rfloor, \frac{k^\nu}{\alpha}\right)
\end{IEEEeqnarray*}
distinct sequences.
Since $C_i$ is a deterministic function of $U$ for all $i \in N$, it holds that the Shannon entropy $H(.)$ of $U$ satisfies the following expression:
\begin{IEEEeqnarray*}{C}
H(U) \geq H(C_0, \ldots, C_{M-1})
\geq H(C_0) + \sum_{i=1}^{M-1} H(C_i | C_0, \ldots, C_{i-1})
\end{IEEEeqnarray*}
As $g_2(k) = O(k^{1-\nu})$, there exists a constant $\beta$ such that each user can download at most $\beta k^{1-\nu}$ bits of data from $U$.
Then, for all $i \in [k]$, it must be that $H(C_i) \leq H(U) \leq \beta k^{1-\nu}$ since $C_i$ is a deterministic function of $U$ for each $i \in [N]$.

Finally, we show that $H(C_0), H(C_i | C_0, \ldots, C_{i-1}) = \Omega(\lambda)$ for all $i=1, \ldots, M-1$.
Towards contradiction, suppose $\exists i^* \colon H(C_{i^*} | C_0, \ldots, C_{i^*-1}) = o(\lambda)$.
Note that
\newline
\begin{IEEEeqnarray*}{rCl}
H(C_0, \ldots, C_{i^*-1}) &\leq& \sum_{i=0}^{M-1} H(C_i) \\
&\leq& \min\left(\frac{k^\nu}{\beta} - \frac{\alpha}{\beta} - \frac{\lambda}{\beta k^{1-\nu}}, \frac{k^\nu}{\alpha}\right) \beta k^{1-\nu} \leq k-\alpha k^{1-\nu}-\lambda.
\end{IEEEeqnarray*}
Now, consider an adversary $\mathcal{A}$ that tries to break the binding property of the VC scheme $\PI$.
Due to the upper bound on the entropy of $(C_0, \ldots, C_{i^*-1})$, it holds that
$H(\mathcal{S}_{i^*} | C_0, \ldots, C_{i^*-1}) \geq \lambda$; 
since $H(\mathcal{S}_{i^*}) \geq k-\alpha k^{1-\nu}$, and
\begin{IEEEeqnarray*}{rCl}
&& H(\mathcal{S}_{i^*}) - H(\mathcal{S}_{i^*} | C_0, \ldots, C_{i^*-1}) \\
&=& I(\mathcal{S}_{i^*}; (C_0, \ldots, C_{i^*-1})) \leq H(C_0, \ldots, C_{i^*-1}) \leq k-\alpha k^{1-\nu}-\lambda.
\end{IEEEeqnarray*}

However, when $H(C_{i^*} | C_0, \ldots, C_{i^*-1}) = o(\lambda)$, for sufficiently large $\lambda$, given $(C_0, \ldots, C_{i^*-1})$, the adversary can find a collision such that $\PI(\mathcal{S}_{i^*})=\PI(\mathcal{S}'_{i^*})$ for two $\mathcal{S}_{i^*} \neq \mathcal{S}'_{i^*}$, with probability $2^{-o(\lambda)}$.
As this is a contradiction, it must be that $H(C_0)$ and $H(C_i | C_0, \ldots, C_{i-1}) = \Omega(\lambda)$ for all $i < M$, and thus, $H(U) = \Omega(k^\nu \lambda)$ and $g_1(k) = \Omega(k^\nu)$.
\end{proof}
\begin{remark}
\label{rem:1}
Theorem~\ref{thm:lower-bound} shows that the update information length scales as $\Tilde{\Theta}(k^\nu \lambda)$ when the runtime complexity for proof updates is $\Tilde{\Theta}(k^{1-\nu})$ and the error probability for the security of the VC is $e^{-\Omega(\lambda)}$ for a PPT adversary.
When the error probability is just stated to be negligible in $\lambda$, then the same proof can be used to show that the update information length must scale as $\Omega(k^\nu \polylog(\lambda))$ for any polynomial function of $\log(\lambda)$.
\end{remark}
To demonstrate the optimality of homomorphic Merkle trees, we show that, like many other VCs, they satisfy the \proofbinding property:
\begin{theorem}
\label{lem:proof-binding-hmt}
The \proofbinding property is satisfied by secure homomorphic Merkle trees.
\end{theorem}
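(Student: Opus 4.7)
The plan is to exploit the fact that an opening proof in a homomorphic Merkle tree contains \emph{both} children of every internal node along the path from the leaf to the root, so the verification algorithm can recompute the root value deterministically from the proof alone, independently of the commitment being checked. Concretely, given $\pi = (u_{b_0,\ldots,b_j,0}, u_{b_0,\ldots,b_j,1})_{j=0,\ldots,h-1}$, the verifier first checks the leaf equation $g^{-1}(u_{b_0,\bin(i^*)}) = f(m_{i^*})$ and then iteratively applies $\tilde{f}$ up to the root, finally comparing the result to the claimed commitment. Hence, if $\pi$ verifies against both $C$ and $C'$ for the same $(m_{i^*}, i^*)$, then $C$ and $C'$ must both equal $g^{-1}(\tilde{f}(u_{b_0,0}, u_{b_0,1}))$, so $C = C'$.

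Next, I would observe that the adversary in the \proofbinding game supplies two \emph{distinct} message sequences $\mathbf{m}$ and $\mathbf{m}'$ (agreeing at index $i^*$) such that $\textsc{Commit}_{pp}(\mathbf{m}) = C = C' = \textsc{Commit}_{pp}(\mathbf{m}')$. Since the honest commitment algorithm produces the commitment by recursively applying $f$ at the leaves and $\tilde{f}$ at the internal nodes, this gives two different leaf vectors that hash to the same root under the canonical Merkle computation.

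From here I would apply the standard Merkle collision-extraction argument: let $i' \neq i^*$ be any index where $\mathbf{m}$ and $\mathbf{m}'$ disagree. Walk down the two homomorphic Merkle trees along the path from the root to position $i'$. The root values agree but the leaf values $f(m_{i'})$ and $f(m'_{i'})$ must disagree (or else one already has a collision on $f$ from $m_{i'} \neq m'_{i'}$). Therefore, there exists a smallest depth $j^*$ at which the two trees first disagree. At depth $j^*-1$ the parent node is identical in both trees, but at depth $j^*$ at least one of the two children differs, yielding a pair $(\alpha, \beta) \neq (\alpha', \beta')$ with $\tilde{f}(\alpha,\beta) = \tilde{f}(\alpha',\beta')$ — an explicit collision on $\tilde{f}$. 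This reduction is black-box and runs in time $O(h)$ given the adversary's output, so any PPT adversary with non-negligible advantage in the \proofbinding game yields a PPT collision-finder for $f$ or $\tilde{f}$, contradicting the collision-resistance assumption underlying the security of the homomorphic Merkle tree (e.g., the lattice-based hardness assumption for the \cite{PT11,PSTY13,QZC+14} instantiations).

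The main obstacle will be formalizing the first step cleanly, since one needs to argue that the verification procedure really does reduce to ``recompute root from proof and compare to $C$,'' which requires being careful with the injective map $g$ and ensuring that $g^{-1}$ is well-defined on all recomputed values (rather than only on honestly generated ones). Once this is addressed, the collision-extraction step is routine and mirrors the standard argument for classical Merkle trees.
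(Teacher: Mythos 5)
Your proposal is correct, but it takes a different reduction than the paper. The paper reduces \proofbinding directly to the \emph{position-binding} property of the homomorphic Merkle tree itself: since the opening proof contains both children of every node on the path, the roots of all sibling subtrees are fixed by $\pi$ and hence identical across the two trees; the paper locates the first such node whose subtree contains an index where the two sequences differ, observes that $g^{-1}$ of that node is itself a homomorphic Merkle tree commitment to two different subsequences, and opens it two ways at a differing index to violate position-binding. You instead first argue $C=C'$ (using the top-level check $g^{-1}(u_{b_0})=\tilde{f}(u_{b_0,0},u_{b_0,1})$ and the fact that the commitment is $g^{-1}(u_\bot)$) and then run the classical Merkle collision-extraction down to an index where the sequences differ, producing an explicit collision on $f$ or $\tilde{f}$. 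Both arguments are sound; yours is the more elementary and yields a concrete collision tied to the underlying lattice assumption, while the paper's is formally tighter with respect to the theorem as stated: the hypothesis is that the tree is ``secure,'' i.e.\ position-binding, and the paper uses only that, whereas you assume collision resistance of $f$ and $\tilde{f}$ --- the standard assumption from which position-binding is proven for these instantiations, but strictly a different (and in principle stronger) hypothesis. The obstacle you flag about $g^{-1}$ is real but resolves exactly as you suggest, since $g^{-1}$ is a total, efficiently computable function on $\mathcal{D}$ and the commitment is defined as $g^{-1}(u_\bot)\in\mathcal{R}$, so the top-level verification equation pins down $C$ from the proof alone.
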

\begin{proof}
Consider an adversary $\mathcal{A}$ that finds an opening proof $\pi$, an index $i^*$, a message $m_{i^*}$, and sequences of messages $(m_0, \ldots, m_{i^*-1}, m_{i^*+1}, \ldots, m_{N-1}) \neq (m'_0, \ldots, m'_{i^*-1}, m'_{i^*+1}, \ldots, m'_{N-1})$ such that $\textsc{Verify}_{pp}(C, m_{i^*}, i^*, \pi) = 1$ as well as $\textsc{Verify}_{pp}(C', m_{i^*}, i^*, \pi) = 1$ for the commmitments $C$ and $C'$ to the sequences $(m_0, \ldots, m_{i^*-1}, m_{i^*}, m_{i^*+1}, \ldots, m_{N-1})$ and $(m'_0, \ldots, m'_{i^*-1}, m_{i^*}, m'_{i^*+1}, \ldots, m'_{N-1})$.
\newline
We show how it can break the position-binding property with this knowledge.
The adversary first constructs the two homomorphic Merkle trees committing to these sequences.
It then finds the first inner node $u'$ within the proof $\pi$ on the path from the message $m_{i^*}$ to the root such that the subtrees under this inner node contain different sequences of messages $(m_{a}, \ldots, m_{b}) \neq (m'_{a}, \ldots, m'_{b})$ at the leaves.
By definition $g^{-1}(b')$ is the homomorphic Merkle tree commitment to the sequence of messages under the node $b'$ on both trees.
Thus, the adversary has created two homomorphic Merkle trees committing to different sequences of messages but with the same homomorphic Merkle tree commitment $C = g^{-1}(b')$.
This implies that the adversary can find a tuple $(C, m_i \in (m_a, \ldots, m_b), m'_i \in (m'_a, \ldots, m'_b), \pi_i, \pi'_i)$ such that $\textsc{Verify}_{pp}(C, m_i, i, \pi_i) = 1$ and $\textsc{Verify}_{pp}(C, m'_i, i, \pi'_i) = 1$ and $m \neq m'$.
Since for all adversaries $\mathcal{A}$, the probability that $\mathcal{A}$ finds such a tuple is negligible in the security parameter $\lambda$, for all adversaries $\mathcal{A}$, the probability that $\mathcal{A}$ finds an opening proof $\pi$, index $i^*$, message $m_{i^*}$ and sequences of messages $(m_0, \ldots, m_{i^*-1}, m_{i^*+1}, \ldots, m_{N-1}) \neq (m'_0, \ldots, m'_{i^*-1}, m'_{i^*+1}, \ldots, m'_{N-1})$ with the above qualities is negligible in $\lambda$, implying that homomorphic Merkle trees are \proofbinding.
\end{proof}

\section{Conclusion}
\label{sec:conclusion}

Dynamic VCs with sublinear update are the key to reducing the size of the global update information while minimizing the runtime of clients synchronizing with the latest commitment.
In this work, we propose a construction that can achieve an update information size of $\Theta(k^\nu)$ and a proof update time of $\Theta(k^{1-\nu})$ in the number of changed messages $k$.
Our construction combines a novel update algorithm (Alg.~\ref{alg.vc.sublinear.p}) with homomorphic trees  and homomorphic Merkle trees that allow each inner node to be expressed as a homomorphic or linear function of the underlying messages.
It achieves the smallest asymptotic complexity for the update information size and proof update time.
We also provide update algorithms for the Verkle trees proposed for stateless clients on Ethereum.

The existing instantiations of homomorphic Merkle trees are based on lattices and require relatively large parameters for security.
In contrast, the existing instantiations of homomorphic trees are based on pairings, and despite outperforming Verkle trees, require trusted setup and relatively large public parameters and opening proofs.
As such, designing post-quantum secure, asymptotically optimal and practically efficient dynamic VCs without trusted setup and with small opening proofs remains an open problem.

\bigskip
{\bf Acknowledgments.}
This work was partially funded by NSF, DARPA, the Simons Foundation, and NTT Research.
Additional support was provided by the Stanford Center for Blockchain Research.
Opinions, findings, and conclusions or recommendations
expressed in this material are those of the authors and do not
necessarily reflect the views of DARPA.

\bibliographystyle{plainurl}
\bibliography{references}

\appendix

\section{Lower Bound on the Size of the Update Information}
\label{sec:appendix}

\begin{theorem}
\label{thm:lower-bound-2}
Consider a dynamic accumulator, where $k$ out of $N$ messages $m_{i_j}$ are updated to $m'_{i_j} \neq m_{i_j}$, $j \in [k]$.
Suppose $|M| = \poly(\lambda)$.
Then, $\Omega(k \log{(N|\mathcal{M}|}))$ bits of information must be published to enable updating the opening proofs after these $k$ updates.
\end{theorem}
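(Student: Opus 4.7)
The plan is an information-theoretic counting argument in the style of Camacho--Hevia and Christ--Bonneau. I would fix the initial vector to be $\mathbf{m} = (0,\ldots,0) \in \mathcal{M}^N$ with its honestly generated commitment $C$ and opening proofs $\pi_0,\ldots,\pi_{N-1}$, and consider a uniformly random update scenario: sample a $k$-subset $I \subseteq [N]$ uniformly, and for each $i \in I$ draw $m'_i$ uniformly from $\mathcal{M} \setminus \{0\}$. The scenario has entropy $\log\binom{N}{k} + k\log(|\mathcal{M}|-1)$, which is $\Omega(k \log(N|\mathcal{M}|))$ in the standard regime where $k \leq (N|\mathcal{M}|)^{1-\epsilon}$ for a constant $\epsilon > 0$.

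Let $P$ denote the total published information, namely the update information $U$ together with the diffs $(i, m'_i - m_i)_{i \in I}$ required as input to $\textsc{ProofUpdate}$. The first step is to show that $P$ together with the fixed old state determines the new vector $\mathbf{m}'$ with overwhelming probability. I would invoke Lemma~\ref{lem:derandomization} to fix the randomness of $\textsc{Update}$ and $\textsc{ProofUpdate}$, obtaining a deterministic scheme with the same security guarantee. Under this derandomization, the new commitment $C'$ and every updated proof $\pi'_j$ become deterministic functions of $P$ and the old state. By position-binding, the pair $(C', \pi'_j)$ uniquely determines $m'_j$ at index $j$, except with probability negligible in $\lambda$. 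Taking a union bound over $j \in [N]$, which costs only a factor of $N = \poly(\lambda)$ in the failure probability, shows that the whole vector $\mathbf{m}'$ is determined by $P$ and the fixed old state up to an event of probability negligible in $\lambda$.

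Next, via Fano-type reasoning, I would conclude that $H(\mathbf{m}' \mid P, \text{old state}) = o(1)$, and since the update scenario is sampled independently of the old state and of $pp$, this yields
\[
|P| \;\geq\; H(P) \;\geq\; H(\mathbf{m}' \mid \text{old state}) - o(1) \;=\; \Omega\!\bigl(k \log(N|\mathcal{M}|)\bigr).
\]

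The main obstacle I anticipate is carrying the negligible binding-failure probability through the entropy argument cleanly, since $\mathbf{m}'$ ranges over a support of size $|\mathcal{M}|^N = 2^{\poly(\lambda)}$, so naively Fano would yield a correction of $\poly(\lambda) \cdot \Pr[\text{fail}]$. Using that the per-index binding error is negligible in $\lambda$ (hence superpolynomially smaller than any $\poly(\lambda)$), and that the union bound across indices preserves negligibility, a careful invocation of Fano's inequality does produce an $o(1)$ correction that can be absorbed into the $\Omega(\cdot)$. A minor secondary point is handling the $\neq$ constraint on updated values, which only loses a constant factor and does not affect the asymptotic bound.
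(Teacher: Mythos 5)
Your proposal is correct and follows essentially the same route as the paper's proof: both decode the full update scenario from the published information by exploiting correctness (the true new values verify) and position binding (no other value verifies), and then lower-bound the published information by the log-count/entropy of possible scenarios. The only differences are cosmetic --- you phrase the counting step via Shannon entropy and Fano's inequality and explicitly invoke the derandomization lemma, whereas the paper uses a direct counting argument over the $\binom{N}{k}$-type number of update sequences.
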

\begin{proof}
The proof idea is very similar to those presented in~\cite{impossibility,revocable}. 
Namely, the update information must contain a minimum amount of bits for the VC to remain correct and secure after the update.

Consider a game between a platform $\mathcal{P}$ maintaining the data structures of the VC and an adversary $\mathcal{A}$.
The platform $\mathcal{P}$ updates $k$ out of $N$ messages $m_{i,j}$ to $m'_{i,j} \neq m_{i_j}$, $j \in [k]$, in a way not known to user $\mathcal{A}$, and publishes the update information $U$ along with the new commitment value $C'$ (let $m'_i = m_i$ for $i \notin \{i_j \colon j \in [k]\}$).
Before receiving the update information, $\mathcal{A}$ knows the old sequence of messages $m_i$, $i \in [N]$, and their opening proofs $\pi_i$.
Upon receiving the update information, $\mathcal{A}$ updates the opening proofs for each message to $\pi'_i$.
Then, it must be that for all $j \in [k]$, $\textsc{Verify}_{pp}(C', m'_{i_j}, i_j, \pi'_{i_j}) = 1$, and for all $i \notin \{i_j \colon j \in [k]\}$, $\textsc{Verify}_{pp}(C', m_i, i, \pi'_i) = 1$. 
Otherwise there would be messages among $m'_i$, $i \in [N]$, for which an updated witness cannot be computed, violating correctness.
Similarly, for all $j \in [k]$, $\textsc{Verify}_{pp}(C', \tilde{m}, i_j, \pi'_{i_j}) = 0$ for any $\tilde{m} \neq m'_{i_j}$; as otherwise the position binding property, thus security, would be violated.
Hence, by calling the function $\textsc{Verify}_{pp}(C', m_i, i, \pi'_{i_j})$ for each index, $\mathcal{A}$ can figure out the indices $i_j$, $j \in [k]$, where the messages were updated.
Similarly, by evaluating the function $\textsc{Verify}_{pp}(C', \tilde{m}, i_j, \pi'_{i_j})$ for the $|\mathcal{M}|$ possible messages $\tilde{m}$ for each $j \in [k]$, $\mathcal{A}$ can identify the new value $m'_{i_j}$ of the message at each such index $i_j$.
Hence, the adversary can recover the sequence $(i_j, m'_{i_j})_{j \in [k]}$.

As there are $\frac{N!}{(N-k)!}\log^k{|\mathcal{M}|}$ possible sequences $(i_j, m'_{i_j})_{j \in [k]}$, it holds that 
$$|U| \geq \log{\left(\frac{N!}{(N-k)!}\log^k{|\mathcal{M}|}\right)} = \Omega(k (\log{N}+\log{|\mathcal{M}|})).$$
\end{proof}

\begin{remark}
When $|M| = \Omega(\poly(\lambda))$, the minimum number of bits to be published depends on the error probability for the security of the PPT adversary. 
As in Remark~\ref{rem:1}, if $|\mathcal{M}| = \Theta(2^{\lambda})$ and the error probability is $e^{-\Omega(\lambda)}$, then the same proof can be used to show that $\Tilde{\Omega}(k\lambda)$ bits of information must be published.
When the error probability is just stated to be negligible in $\lambda$, then the number of bits must scale as $\Tilde{\Omega}(k\polylog{\lambda})$ for any polynomial function of $\log{(\lambda)}$.
\end{remark}

\section{Homomorphic Trees with No Update Information}
\label{sec:homomorphic-no-update-information}

\begin{algorithm}[ht!]
    \captionsetup{font=small} 
    \caption{Algorithms for a homomorphic tree. Each user knows the total number of leaves $N$. 
    }
    \label{alg.merkle.homomorphic}
    \begin{algorithmic}[1]\footnotesize
    \Alg{\sc Update}{$C, (i, m'_i-m_i)_{i \in [N]}, \T$}
        \Let{C'}{\mathrm{new}\ \mathrm{VC}} \Comment{Commitment need not be the same as the root.}
        \State\Return $(C',\emptyset, \T)$
    \EndAlg
    \Alg{\sc ProofUpdate}{$(i_j, m'_{i_j}-m_{i_j})_{j \in [k]}, \pi_x , m'_x, x, U$}
        \Let{\pi'_x}{\{\}}
        \Let{(b_1,\ldots,b_h)}{\bin(x)}
        \For{$d = h, \ldots, 1$}
            \Let{\mathcal{S}}{\{j \in [k] \colon\bin(i_j)[0:d-p-1] = (b_1, \ldots, b_{d-p})\}}
            \Let{\pi'_x[b_0,\ldots,b_d]}{\pi_x[b_0,\ldots,b_d] \cdot \prod_{j \in \mathcal{S}} h_{i_j,d, (b_{d-p+1},b_d)}(m'_{i_j}-m_{i_j})}
        \EndFor
        \State\Return $\pi'_x$
    \EndAlg
    \end{algorithmic}
\end{algorithm}

\subsection{Update Information}
When $k$ messages are updated, the update algorithm merely calculates the new value of the vector commitment.
As in KZG commitments, the update information is $U=\emptyset$.

\subsection{Proof Update}
When the messages are modified at $k$ points, each user holding an opening proof $\pi_x$ for index $x$ can calculate the new values of the inner nodes within the proof using the old and the new messages and modify the proof respectively.

\subsection{Complexity}
Calculating each partial digest $h_{x,j,c}$ takes constant time.
Then, each user can update each inner node within its opening proof after at most $k$ operations, making the total number of operations $\Theta(k\log{(N)}) = \Tilde{\Theta}(k)$ in the worst case.
The size of the update information $U$ is $\Tilde{\Theta}(1)$.
Hence, this scheme matches the algebraic VCs in terms of complexity.

\section{Why Are Homomorphic Merkle Trees Needed?}
\label{sec:why-homomorphic}

Merkle trees based on SHA256 can also achieve complexity sublinear in $k$, for both the update information and the runtime of proof updates, if the users have access to the old messages and inner nodes of the Merkle tree.
In this case, homomorphism is not needed since the nodes can find the effect of the updated messages on the inner nodes within their Merkle proofs by hashing these messages together with the old inner nodes.
However, this is possible for only a \emph{single} batch of updates.
Indeed, if this scheme is to be repeated, the assumption of having access to the old inner nodes requires the users to keep track of changes throughout the Merkle tree, by calculating the effect of all updated messages on all inner nodes.
This implies a runtime linear in $k$ per proof updates.
In contrast, homomorphic Merkle trees can maintain a sublinear complexity for future proof updates since they do not require access to the old messages and inner nodes for finding the partial digests of the updated messages.

\section{An Alternative Construction} 
\label{sec:appendix-alternative}

An alternative tree-based VC is proposed by \cite{PPS21}, where each inner node is itself a lattice-based VC to its children (akin to Verkle trees~\cite{verkle}).
Opening proof for a message consists of the inner nodes (commitments) on the path from the message to the root, along with the opening proofs for these inner nodes with respect to their parent nodes.
The construction again enables expressing each inner node as a sum of partial digests of the messages underneath.
Using the public parameters and the updated inner nodes, users can then derive their updated opening proofs at different heights of the tree.
This construction supports trees of large degrees $c$ without a linear increase in the proof size as would be the case for Merkle trees; this however comes at the cost of a larger runtime complexity for proof updates, proportional to the degree.
Section~\ref{sec:verkle-update} describes similar steps in the context of Verkle trees, and exposes the dependence of the runtime complexity of proof updates on the tree degree $c$.

\section{Derivation of the Opening Proof $\mathbf{\pi^{(h-g)}_t}$} 
\label{sec:appendix-derivation}

Since
\begin{IEEEeqnarray*}{rCl}
h(X)-g(X) &=& \sum_{j=0}^{h-1} r^j \left(\frac{f_{b_0,\ldots,b_j}(X)}{t-b_{j+1}} - \frac{f_{b_0,\ldots,b_j}(X)-u_{b_0,\ldots,b_{j+1}}}{X-b_{j+1}}\right) \\
&=& \sum_{j=0}^{h-1} r^j \frac{(X-t)f_{b_0,\ldots,b_j}(X)+u_{b_0,\ldots,b_{j+1}}(t-b_{j+1})}{(t-b_{j+1})(X-b_{j+1})},
\end{IEEEeqnarray*}
the opening proof $\pi=\pi^{(h-g)}_t$ for index $t$ within the polynomial $h(X)-g(X)$ is
\begin{IEEEeqnarray*}{rCl}
&& \left[\frac{h(X)-g(X)-(h(t)-g(t))}{X-t}\right] \\
&=& \left[\sum_{j=0}^{h-1} \frac{r^j}{X-t} \left(\frac{(X-t)f_{b_0,\ldots,b_j}(X)+u_{b_0,\ldots,b_{j+1}}(t-b_{j+1})}{(t-b_{j+1})(X-b_{j+1})}-\frac{u_{b_0,\ldots,b_{j+1}}}{t-b_{j+1}}\right) \right] \\
&=& \left[\sum_{j=0}^{h-1} \frac{r^j}{t-b_{j+1}} \frac{f_{b_0,\ldots,b_j}(X)-u_{b_0,\ldots,b_{j+1}}}{X-b_{j+1}}\right] \\
&=& \prod_{j=0}^{h-1} \left[\frac{f_{b_0,\ldots,b_j}(X)-u_{b_0,\ldots,b_{j+1}}}{X-b_{j+1}}\right]^{\frac{r^j}{t-b_{j+1}}} = \prod_{j=0}^{h-1} \left(\pi^{f_{b_0,\ldots,b_j}}_{b_{j+1}}\right)^{\frac{r^j}{t-b_{j+1}}}
\end{IEEEeqnarray*}

\section{Update and Proof Update Algorithms for KZG Commitments and Merkle Trees}
\label{sec:appendix-algorithms}

\begin{algorithm}[ht!]
    \captionsetup{font=small} 
    \caption{Update and proof update algorithms for KZG commitments. Each user knows the value $N$ and has access to a table of commitments and opening proofs of Lagrange basis polynomials $L_i(X)$ at each point $j \in \mathbb{F}_p$, $j<N$. The parameter $\pi_{i_j,x}$ denotes the opening proof of $L_i(X)$ at index $i_j$. The update and proof update algorithms only need the difference of the updated messages.}
    \label{alg.kzg}
    \begin{algorithmic}[1]\footnotesize
    \Alg{\sc Update}{$C, (i_j, m'_{i_j}-m_{i_j})_{j \in [k]}, \emptyset$}
        \Let{C'}{C \prod_{j=1}^m[L_{i_j}(X)]^{(m'_{i_j}-m_{i_j})}}
        \Let{U}{\emptyset}
        \State\Return $(C',U, \emptyset)$
    \EndAlg
    \Alg{\sc ProofUpdate}{$C, (i_j, m'_{i_j}-m_{i_j})_{j \in [k]}, \pi_{x} , m'_{x}, x, U$}
        \Let{\pi'_x}{\pi_x \prod_{j=1}^k \pi^{m'_{i_j}-m_{i_j}}_{i_j,x}}
        \State\Return $\pi'_x$
    \EndAlg
    \end{algorithmic}
\end{algorithm}

\begin{algorithm}[ht!]
    \captionsetup{font=small} 
    \caption{Update and proof update algorithms for a Merkle tree. Each user knows the total number of leaves $N$. The variable $\T$ denotes the dictionary of the old Merkle tree nodes, prior to the message updates, and $\pi_x$ denotes the Merkle proof for index $x$, whose components are organized as a dictionary. The algorithm $\textsc{Empty}(.)$ initializes a dictionary with the indices of the Merkle tree nodes as keys and $\bot$ as their values. Not all of $U$ is passed to the proof update algorithm and its relevant parts are read selectively to keep the runtime at a minimum.}
    \label{alg.merkle}
    \begin{algorithmic}[1]\footnotesize
    \Alg{\sc Update}{$C, (i, m_i)_{i \in [N]}, (i, m'_i)_{i \in [N]}, \T$}
        \Let{b_0}{\bot}
        \Let{U}{\textsc{Empty}()}
        \For{$j \in [k]$}
            \Let{(b_1,\ldots, b_h)}{\bin(i_j)}
            \Let{\T[b_0,b_1,\ldots,b_h]}{H(m'_{i_j})}
            \Let{U[b_0,b_1,\ldots,b_h]}{\T[b_0,b_1,\ldots,b_h]}
            \For{$d = h-1, \ldots, 0$}
                \Let{\T[b_0,\ldots,b_d]}{H(\T[b_0,\ldots,b_d,b_{d+1}],\T[b_0,\ldots,b_d,b_{d+1}])}
                \Let{U[b_0,\ldots,b_d]}{\T[b_0,\ldots,b_d]}
            \EndFor
        \EndFor
        \Let{C'}{\T[b_0]}
        \State\Return $(C', U, \T)$
    \EndAlg
    \Alg{\sc ProofUpdate}{$C, \pi_x , m'_x, x, U$}
        \Let{\pi'_x}{\{\}}
        \Let{(b_1,\ldots,b_h)}{\bin(x)}
        \For{$d = h, \ldots, 1$}
            \If{$(b_0,b_1 \ldots\overline{b}_d) \in U$}
                \Let{\pi'_x[b_0,b_1 \ldots\overline{b}_d]}{U[b_0,b_1 \ldots\overline{b}_d]}
            \Else
                \Let{\pi'_x[b_0,b_1 \ldots\overline{b}_d]}{\pi_x[b_0,b_1 \ldots\overline{b}_d]}
            \EndIf
        \EndFor
    \State\Return $\pi'_x$
    \EndAlg
    \end{algorithmic}
\end{algorithm}

\end{document}